\DeclareMathOperator*{\argmax}{arg\,max}
\DeclareMathOperator*{\argmin}{arg\,min}
\begin{document}

\baselineskip=0.65truecm

\SetKwFor{ForAll}{forall}{do}{end forall}

\newtheorem{prop}{Proposition}
\newtheorem{theorem}{Theorem}
\newtheorem{corol}{Corollary}
\newtheorem{fact}{Fact}
\newtheorem{lemma}{Lemma}
\newtheorem{definition}{Definition}
\newtheorem{remark}{Remark}
\newtheorem{corollary}{Corollary}

\def\MIS{M}
\def\spec{\infty}

\def\Amax{A}
\def\Bmax{B}
\def\Cmax{B}
\def\Dmax{C}

\newcommand{\GC}[1]{{\color{blue}\textit{\textbf{Gennaro:} #1}}}
\newcommand{\remove}[1]{}

\newcommand{\Evg}{{{\sf Evg}}}
\newcommand{\Inf}{{{\sf Inf}}}
\newcommand{\A}[1]{\Evg[#1]}
\newcommand{\I}[1]{\Inf[#1]}

\newcommand{\Ag}[2]{\Evg_{#1}[#2]}
\newcommand{\Ig}[2]{\Inf_{#1}[#2]}

\newcommand{\N}{{\mathbb{N}}}

\def\ti{t_I}
\def\ta{t_E}

\newcommand{\n}[2]{NO_{#1}[#2]}
\newcommand{\nwI}[2]{\widehat{NO}_{#1}[#2]}
\newcommand{\nn}[2]{Inf_{#1}[#2]}
\newcommand{\nnwI}[2]{\widehat{Inf}_{#1}[#2]}
\newcommand{\nnn}[2]{Evg_{#1}[#2]}
\newcommand{\nnnwI}[2]{\widehat{Evg}_{#1}[#2]}

\def \t {{{t}}}
\def\V{{\cal V}}
\def\bs{{\bf s}}
\def \vc {{{\sc VC}}}
\def\X{{\cal X}}
\def\T{{\cal T}}
\def\taa{T_E}
\def\tii{T_I}
\def\GX{G_X}
\def\GY{G_Y}
\def\GZ{G_Z}
\def\OGX{OPT_{G_X}}
\def\OGXo{OPT_{G_X^0}}
\def\OGXi{OPT_{G_X^i}}
\def\OGXii{OPT_{G_X^{i-1}}}
\def\OGY{OPT_{G_Y}}
\def\OGZ{OPT_{G_Z}}
\def \IA {H}

\def\MSS{{{MES\ }}}

\title{Evangelism in Social Networks: Algorithms and Complexity
\thanks{An extended abstract  of this paper was presented at the 27th International Workshop on Combinatorial Algorithms (IWOCA 2016), Helsinki, Finland, August 17--19, 2016 \cite{CGRV16}}}

\author{
  Gennaro Cordasco\\
	Seconda Universit\`a degli studi di Napoli, Caserta, Italy\\
  \texttt{gennaro.cordasco@unina2.it}
  \and
  Luisa Gargano, Adele Anna Rescigno and Ugo Vaccaro\\
	Universit\`a degli Studi di Salerno, Fisciano, Italy\\
  \texttt{\{lgargano,arescigno,uvaccaro\}@unisa.it}
}

%
%

\maketitle

\begin{abstract}
We consider a population of interconnected  individuals that, with
respect to a piece of information, at each  time instant 
can be   subdivided into three (time-dependent) categories: 
{\em agnostics, influenced}, and {\em evangelists}.
A dynamical process of information diffusion  evolves among the individuals 
of the population  according to the following rules.
Initially, all individuals are agnostic.
Then,  a set of people   is chosen  from the outside and  
convinced to start evangelizing, i.e., to start spreading the information.
When a number of evangelists, greater than a given  threshold, 
communicate  with a  node $v$, the  node $v$ becomes influenced, whereas, 
as soon as the  individual  $v$ is contacted  by a sufficiently \emph{much larger}
number of evangelists,   it is itself converted into  an evangelist and consequently 
it starts spreading the information. The question is: How to choose a bounded cardinality  
initial set of evangelists so as to maximize the final number of \emph{influenced} 
individuals? We prove that the  problem is hard to solve, even  in an  
approximate sense. On the positive side,  we present exact polynomial time algorithms for trees and 
complete graphs.  For general graphs, we derive exact   parameterized algorithms.
We also investigate   the problem  when
the objective is to select a minimum number of evangelists capable of influencing the whole network.
Our motivations to study these problems come from the areas of Viral Marketing and 
the analysis of quantitative models of spreading  of
influence in social networks.
\end{abstract}

\section{The Context}
Customer Evangelism \cite{MCH}
 occurs when a customer
\emph{actively} tries to convince other customers to buy 
or use a particular brand. Fathered by Apple marketing \emph{guru} Guy Kawasaki in the 90's \cite{Kaw}, the idea of consumer evangelism has found
a new and more powerful incarnation 
in modern communications media. Social networks like Twitter, Facebook and Pinterest
have indeed immensely empowered properly
motivated individuals towards brand advocacy and proselytism. 
We plan to 
abstract a few algorithmic problems out of  this scenario,  and provide 
 efficient solutions for some of them.

\section{The Problem}
Our model   posits  an interconnected  population consisting of  individuals that, with
respect to a piece of information and/or an opinion, at each  time instant can be   subdivided into three 
time-dependent categories: 
{\em agnostics, influenced}, and {\em evangelists}.
Initially, all individuals are agnostic.
Then,  a set of people   is chosen  and  converted into  evangelists, that is, convinced to start
spreading the information. When a sufficiently large number of
evangelists
communicate  with an  node $v$, the  node $v$ becomes influenced; as soon as the 
individual  $v$ has in his neighborood a  \emph{much larger} 
number of evangelists,   it is converted to  an evangelist and \emph{only then} it starts spreading the information itself.
Our  model  can  be seen also as an idealization of  
diffusion processes  studied  in the area of {memetics}.
A \emph{meme} \cite{D}  is a convinction, behavior, or fashion  that spreads from person to person within a culture.
It is apparent   that not every  meme learned by a person   spreads among the individuals of a population. We are 
making here the reasonable hypothesis  that individuals indeed  acquire  a meme when it has been heard of   from a few friends,
but people  start spreading the same meme only when they believe it is popular, fashionable, or important, i.e., 
when  it has been communicated to them by a \emph{large} number of friends. This is not too far
from what has been experimentally observed about  how  memes evolve and  spread  within Facebook \cite{Ad+}.

A bit more concretely, we are given 
a graph $G = (V,E)$, abstracting a social network,  where the node set $V$ corresponds to 
people and the edge set to relationships among them. We denote by $N_G(v)$  the neighborhood of node $v\in V$ 
and by $d_G(v)= |N_G(v)|$ the degree  of $v $ in $G$, we avoid the subscript $G$ whenever  the graph is clear from the context.
Moreover,  let $\ti: V \to  \{0,1,2,\ldots\}$  and  $\ta: V \to \{0,1,2,\ldots\}$ be two functions assigning 
integer thresholds to the nodes in  $G$ such that 
  $0\leq \ti(v)\leq\ta(v)\leq d(v)+1$, for each  $v\in V$.

An evangelization   process in $G$, starting at a subset of nodes 
$S \subseteq V$, is characterized by two sequences of node subsets
$$\A{S,0} \subseteq \A{S,1} \subseteq \ldots \subseteq \A{S,\tau} \subseteq \ldots \subseteq V,$$ and 
$$\I{S,0} \subseteq \I{S,1} \subseteq \ldots \subseteq \I{S,\tau} \subseteq \ldots \subseteq V,$$
where for each $\tau=0,1,\ldots,$ it holds that $\A{S,\tau}\subseteq \I{S,\tau} $. The  process  is formally described by the following dynamics:
\begin{align*}
\A{S,0}&=\I{S,0} = S \mbox{, and and for each } \tau{\geq} 1\\
\A{S,\tau} &= \A{S,\tau{-}1}\cup \Big\{u : \big|N(u)\cap \A{S,\tau {-} 1}\big|\ge \ta(u)\Big\},\\
\I{S,\tau} &= \I{S,\tau{-}1} \cup \Big\{u : \big|N(u)\cap \A{S,\tau {-} 1}\big|\ge \ti(u)\Big\}.
\end{align*}
In  words, a node $v$ becomes influenced if  the  number of its evangelist neighbors is greater than or equal to
its influence threshold $\ti(v)$, and  $v$  becomes an evangelist  if the  number of evangelists among  its 
  neighbors reaches
its evangelization  threshold $\ta(v)\geq \ti(v)$.
The  process terminates when $\A{S,\rho}=\A{S,\rho-1}$ for some $\rho\geq 0.$ 
We denote by $\A{S}=\A{S,\rho}$ and $\I{S}=\I{S,\rho}$ the final  sets  when the process terminates.
The initial set $S$ is also denoted as a {\em seed set} of the evangelization process.
Due to  foreseeable difficulties  in  hiring evangelists, it seems reasonable
trying to limit their initial  number,  and see how the dynamics of the spreading process evolves. Therefore, we 
  state our    problems as follows:

\smallskip
\noindent
{\sc  Maximally Evangelizing   Set (\MSS\!\!)}.\\
{\bf Instance:} A graph $G=(V,E)$, thresholds $\ti, \ta:V\to \{0, 1, 2, \ldots \}$,
and  a budget $\beta$.\\
{\bf Question:} Find a seed set $S\subseteq V$, with  $|S| \leq \beta$, such that  $|\I{S}|$ 
is \emph{maximum}.

\smallskip
\noindent
{{\sc  Perfect Evangelizing   Set (PES)}.\\
{\bf Instance:} A graph $G=(V,E)$, thresholds $\ti, \ta:V\to \{0, 1, 2, \ldots \}$.\\
{\bf Question:} Find a seed set $S\subseteq V$ of \textit{minimum size} such that  $\I{S}=V$.}

{	It is worth to mention that the  PES problem is, in a sense, easier than the \MSS\ problem.
Indeed,  any  algorithm that solves the \MSS\ problem can be easily adapted to the PES problem 
by means of a standard binary search argument, while the opposite it is not true.}

\section{What is already known and what {we} prove}
The above algorithmic problems have  roots in the broad area 
of the \emph{spread of influence}
in Social Networks (see  \cite{CLC,EK} and references quoted therein).
In the introduction of this  paper we have already  highlighted the connections of our model to the 
general area of viral marketing.  There, 
companies  wanting to
promote products or behaviors might try initially  to target and convince
a few individuals which, by word-of-mouth effects, can  trigger
a  cascade of influence in the network, leading to
an  adoption  of the products by  a much larger number of individuals.
Not unexpectedly, viral marketing  has also become  an important tool in
 the communication strategies of politicians \cite{LKLG,T}.
Less secular applications of our evangelization process can also be envisioned.
  Here, we shall limit ourselves 
to discuss the work that is most directly related to ours, and refer the reader to the 
 authoritative texts  \cite{CLC,EK} for a synopsis of the area.
The first authors to study  spread of influence in networks
from an algorithmic point of view were Kempe \emph{et al.}, see \cite{KKT-15}.
However, they were mostly interested in networks with  randomly chosen thresholds.
Chen \cite{Chen-09} studied the following minimization problem:
given an unweighted  graph $G$ and fixed thresholds $t(v)$, for each vertex $v$ in $G$,
find
a  set of minimum size that eventually influences
all (or a fixed fraction of) the nodes of $G$.
He proved  a  strong inapproximability result that makes unlikely the existence
of an  algorithm with  approximation factor better than  $O(2^{\log^{1-\epsilon }|V|})$.
Chen's result stimulated a series of papers, e.g.,   
\cite{ABW-10,BCNS,BHLM-11,Centeno12,Chopin-12,Chun,Cic14,Cic+,C-OFKR,CO+,Ga+,LNK,NNUW,Re} that isolated interesting cases 
in which the problem (and variants thereof) become tractable.

All  of the above quoted  papers considered the basic model in which \emph{any} node,
as soon as it is  influenced by its neighbors,  immediately starts spreading influence. 
The more refined model put forward in this paper, that differentiate among active 
spreaders (evangelists) and plain informed (influenced)  nodes, appears to be new, to the best of our knowledge.
{We would like to point out that we obtain  an interesting information diffusion model
already in the particular case in which $\ti(v)=1$, for each node $v$. 
In fact, in this case nodes in the sets
$\I{S,\tau}$  would correspond to people that have simply  \emph{heard} about a piece of information, while people 
in the sets  $\A{S,\tau}$ would correspond  to people who are \emph{actively}
spreading  that same piece of  information.}


{In Section \ref{sec:complexity}, we {first} prove that the  MES  problem is hard to solve, even  in the 
approximate sense. {Subsequently, we design} exact algorithms, for the MES problem,  parameterized with respect
to neighborhood diversity (and, as a byproduct, by  vertex cover) and for the PES problem parameterized with respect
to the treewidth.
In Section \ref{sec-trees}, we present exact polynomial time algorithms  for the MES problem on complete graphs  and trees. 
Finally, in Section \ref{sec-dense} we study the PES problem in dense graphs.} 

\section{MES  is hard, also to approximate}\label{sec:complexity}
The \MSS problem includes   the  \textsc{Influence Maximization} (IM) problem \cite{KKT-15}, 
that  is  known to be NP-hard to approximate within a ratio of $n^{1-\epsilon}$,  for any 
$\epsilon > 0$. 
In our terminology, the IM problem  takes in input 
 a graph  $G$ with a threshold function   $t:V\to \{0,1,2,\ldots\}$
and a budget $\beta$, 
and asks for a subset $S$ of  $\beta$ nodes of $G$ such that $|\A{S}|$ is maximum.
An  instance of the IM problem 
corresponds to the  
 \MSS instance consisting  of  $G$, $\beta$,  and threshold functions $\ta, \ti$,  with  
$\ti(v)=\ta(v)=t(v)$, for each  $v\in V$. Here
we show that  the \MSS problem remains hard even if the influence threshold $\ti$ is equal to
$1$,  for each node $v\in V$.

%

\begin{theorem}\label{teo-compl}
It is NP-hard to approximate the  \MSS problem  within a ratio of $n^{1-\epsilon}$ for any 
$\epsilon > 0$ even when $\ti(v)=1$, for each  node $v \in V$.
\end{theorem}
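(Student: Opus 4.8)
The plan is to reduce from the IM problem, whose $n^{1-\epsilon}$ inapproximability is recalled above and which coincides with \MSS when $\ti(v)=\ta(v)=t(v)$. The only feature of that special case we must give up is the equality $\ti=\ta$: in an IM instance the objective $|\A{S}|$ equals $|\I{S}|$, whereas once we force $\ti\equiv 1$ a node becomes influenced as soon as a single evangelist sits in its neighbourhood, so $\I{S}$ can be strictly larger than $\A{S}$. First I would build, from an IM instance $(H,t,\beta)$ with $H=(V_H,E_H)$ and $|V_H|=n$, a \MSS instance $G$ that keeps the evangelist dynamics of $H$ intact while turning the \emph{size of the evangelist set} into the \emph{size of the influenced set}. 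Concretely, keep $H$ and its edges, set $\ta(v)=t(v)$ and $\ti(v)=1$ for every $v\in V_H$, and attach to each $v$ a private ``audience'' of $M$ fresh pendant vertices, each joined only to $v$; every pendant $p$ receives $\ti(p)=1$ and $\ta(p)=2$. Since a pendant has a single neighbour it can never reach its evangelization threshold, so pendants are inert for the evangelist process, and the budget stays $\beta$.

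Next I would verify that the construction decouples cleanly. Because pendants never become evangelists and are attached to a single original node, they never alter the count $|N(u)\cap\A{S,\tau-1}|$ for any $u\in V_H$; hence, for every seed set $S\subseteq V_H$, the evangelist sequence restricted to $V_H$ is exactly the one produced by the IM process on $H$, so $\A{S}\cap V_H$ equals the final active set of the IM instance. With $\ti\equiv 1$ the influenced set is then $\I{S}=\A{S}\cup\{u:N(u)\cap\A{S}\neq\emptyset\}$; in particular a pendant of $v$ is influenced precisely when $v$ is an evangelist. Writing $a=|\A{S}\cap V_H|$, the audiences contribute exactly $Ma$ influenced pendants, while the remaining influenced vertices are original nodes lying in $\A{S}$ or in $N(\A{S})$, a set of size at most $a(\Delta_H+1)$ with $\Delta_H$ the maximum degree of $H$. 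Thus $Ma\le|\I{S}|\le(M+\Delta_H+1)\,a$, i.e.\ $|\I{S}|$ is sandwiched between $Ma$ and $Ma\bigl(1+(\Delta_H+1)/M\bigr)$.

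I would then transfer the gap. Using the IM hardness in its amplified form — it is NP-hard to distinguish instances of optimum $\ge g$ from those of optimum $\le g\,n^{-(1-\epsilon)}$ — the two regimes become, in $G$, an optimum at least $Mg$ versus an optimum at most $(M+\Delta_H+1)\,g\,n^{-(1-\epsilon)}$. Taking $M$ large enough that $(\Delta_H+1)/M$ is negligible drives the ratio of the two optima arbitrarily close to $n^{1-\epsilon}$; since the total number of vertices is $N=n(M+1)$, re-expressing this ratio in terms of $N$ would give the claimed inapproximability for \MSS with $\ti\equiv 1$, \emph{provided} the size blow-up can be kept compatible with the target exponent — the delicate point I turn to next. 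A routine step to dispatch along the way is that seeding a pendant is never advantageous, so the optimum over $S\subseteq V$ is attained by some $S\subseteq V_H$ and the reduction may restrict to such seed sets.

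The main obstacle I expect is exactly the leakage created by $\ti\equiv 1$: because every neighbour of an evangelist is automatically influenced, a dishonest seed set could try to harvest influence by parking its budget on high-degree original vertices instead of triggering a genuine cascade, which would inflate the no-instance optimum and collapse the gap. Controlling this is what forces the audience factor $M$ to dominate the stray term $\Delta_H+1$; and since the $n^{1-\epsilon}$ hardness of IM lives on dense instances where $\Delta_H$ may be as large as $n$, it is also what makes the trade-off between the size of $M$ (hence of $N$) and the surviving exponent the crux of the argument. Guaranteeing that large influence is attainable only through a real evangelist cascade — so that the no-instance value stays small — while simultaneously keeping the blow-up in $N$ small enough to leave an exponent arbitrarily close to $1$, is the heart of the proof, and is presumably handled either by starting from an IM hardness instance with controlled seed degree or by choosing the IM parameter $\epsilon$ before fixing $M$.
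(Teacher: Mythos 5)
Your construction is essentially the paper's. The paper replaces each $v_i$ by a star whose center keeps the threshold $t(v_i)$ and whose $n$ leaves play exactly the role of your private audiences (the paper gives the leaves $\ta=1$ rather than $2$; since a leaf sees no evangelist before its own center becomes one, this is immaterial), and your decoupling argument --- pendants never feed back into the evangelist dynamics on $V_H$ --- is the one the paper uses. The genuine gap is that you stop exactly where the proof has to be closed: you never fix $M$, and your error estimate is too weak to let you fix it. You bound the non-pendant part of $\I{S}$ by $a(\Delta_H+1)$, a \emph{per-evangelist} count, and then correctly observe that $\Delta_H$ may be of order $n$, so no choice of $M$ visibly dominates the leakage without an unacceptable blow-up. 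The missing observation is that the leakage must be bounded \emph{globally}: every influenced non-pendant vertex lies in $V_H$, so the non-pendant contribution to $\I{S}$ is at most $|V_H|=n$ irrespective of $a$ and of $\Delta_H$. Hence $Ma\le|\I{S}|\le Ma+n$, and the fixed choice $M=n$ already yields the exact combinatorial equivalence the paper proves: there is a seed of size $\beta$ with at least $k$ evangelists in the IM instance if and only if there is a seed of size $\beta$ influencing at least $k(n+1)$ vertices of the new graph (for the harder direction, at least $k(n+1)-n$ of the influenced vertices are leaves, each evangelist center accounts for at most $n$ of them, and $\lceil (k(n+1)-n)/n\rceil=k$). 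No asymptotic tuning of $M$, no ``dishonest harvesting'' analysis, and no degree control on the IM instance is needed; your reduction of the search to seeds $S\subseteq V_H$ is handled in the paper by the analogous replacement of $S'$ with the set of centers of the stars it meets.

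Your residual concern --- that the quadratic blow-up from $n$ to $n(n+1)$ vertices turns an $n^{1-\epsilon}$ gap into only an $N^{(1-\epsilon)/2}$ gap in terms of the new instance size $N$ --- is a fair one, but it is not addressed by the paper either: after establishing the exact correspondence above, the paper simply invokes the inapproximability of IM from \cite{KKT-15}. So the ``controlled seed degree'' or ``choose $\epsilon$ before $M$'' devices you hypothesize are not in the paper, and if the exponent in the statement is to be read against the size of the constructed instance, that point would have to be argued on top of what either you or the paper wrote.
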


\begin{proof}
We  construct a gap-preserving reduction from the 
{\sc  Influence Maximization (IM)} problem.
The 
theorem follows from the inapproximability of influence maximization problem proved in \cite{KKT-15}.
Consider an instance of the
IM problem consisting in  a graph $G = (V,E)$ 
with threshold function $t(\cdot)$ and bound $\beta$.
Let $V=\{v_1,\ldots,v_n\}$,
 we build a graph $G' = (V',E')$ having $n(n+1)$ nodes, as follows:
	\begin{itemize}
		\item Replace each $v_i\in V$ by a gadget $G'_i$ consisting in a star  in which the node set is 
		$V'_i=\{v_{i,0},v_{i,1},\ldots, v_{i,n}\}$ and the center $v_{i,0}$ is connected with each of  the other nodes  $v_{i,1},\ldots, v_{i,n}$.
		Formally,	
		
		 $\quad - V'=\bigcup_{i=1}^n V'_i=\{v_{i,j}\ |\ 1\leq i\leq n,\, 0\leq j\leq n\}$ 
		
  $\quad - E'=\{(v_{i,0},v_{\ell,0}) | 1\leq i<\ell\leq n,\, (v_i,v_\ell) \in E\} \bigcup \{ (v_{i,0},v_{i,j}), \mbox{ for } i,j=1,\ldots, n\}$.
		\item the node $v_{i,0}$ has threshold $\ta(v_{i,0})=t(v_i)$, while each other node 
		$v_{i,j}\in V'_i$  with $j\geq 1$ has $\ta(v_{i,j})=1$, for $i=1,\ldots, n$.
	\end{itemize}
 Notice that  $G$ corresponds
to the subgraph of $G'$ induced by the set $\{v_{i,0}  | 1\leq i \leq n \}$.
 Hence, for each star $G'_i$ in $G'$, the center $v_{i,0}$ plays the role
of $v_i$ in $G$. 
Moreover, it is worth mentioning that during an
evangelization process in $G'$ if the  node $v_{i,0}$ in the  gadget $G'_i$ is an evangelist, then all the nodes in $V'_i$
will be influenced  within the next round.

\noindent
We prove that: {\em  There exists a seed set $S \subseteq V$ for $G$  of size $|S| = \beta$ s.t. $|\Ag{G}{S}|\geq k$ iff there exists a seed set $S'\subseteq V'$ for $G'$ of cardinality $|S'|=\beta$ such that $|\Ig{G'}{S'}| \geq k (n+1).$ }

Assume that $S \subseteq V$ is a seed set for $G$ such that $|S| =\beta$ and  $|\Ag{G}{S}|\geq k$, we can easily build a seed set for $G'$ as 
$S' = \{v_{i,0} \in V' | v_i \in S\}$.
Clearly, $|S'| = |S|$. To see that $|\Ig{G'}{S'}| \geq k (n+1)$ we notice that since  $G$ is isomorphic to the subgraph of $G'$ induced by $\{v_{i,0} \in V'_i | v_i \in V \},$ all the nodes $v_{i,0} \in   V'_i$ such that $v_i \in \Ag{G}{S}$  will become evangelists. Then once a node $v_{i,0}$ becomes an evangelist, the  nodes  
$\{v_{i,1},v_{i,2},\ldots,v_{i,n}\}$ will be influenced  in the next round.
 Hence $|\Ig{G'}{S'}| \geq (n+1) \times|\Ag{G}{S}|\geq k (n+1).$

On the other hand, assume that $S'\subseteq V'$ is a seed  set for $G'$ 
such that $|S'| = \beta$ and  $|\Ig{G'}{S'}|\geq k(n+1)$, we can easily build a seed set for $G$ as 
$S = \{v_i \in V | \  S'\cap V'_i\neq\emptyset\}$.
By construction $|S| \leq |S'|.$ 
Let $S''=\{v_{i,0} \in V' \ |\  S'\cap V'_i\neq\emptyset\}$. 
It is easy to observe that $|\Ig{G'}{S''}| = |\Ig{G'}{S'}|\geq k(n+1)$.
Let $V'_{0}=\{v_{i,0}  \ |\  1\leq i\leq n \}$; to see that $|\Ag{G}{S}| \geq k$ we will show that 
$|\Ag{G'}{S''} \cap V'_0| \geq k$. The result will follow since $G$ is isomorphic to the subgraph of $G'$ induced by 
$V'_0.$ 
In order to show that  $|\Ag{G'}{S''} \cap V'_0| \geq k$, we first observe  that  
$|\Ig{G'}{S''} \cap (V'-V'_0) | \geq |\Ig{G'}{S''}  |-|V'_0|\geq k(n+1)-n$. 
Nodes in $V'-V'_0$ can be influenced only by nodes in 
$V'_0$. 
Moreover, a node in $V'_0$ can influence at most $n$ nodes in $V'-V'_0$---the leaves of the star of which it is the center. Hence in order to influence $k(n+1)-n$ nodes in $V'-V'_0$ at least $	\left \lceil \frac{k(n+1)-n}{n} \right \rceil \geq k$ nodes must be evangelist among those in $V'_0$ and consequently  $|\Ag{G'}{S''} \cap V'_0 | \geq k.$ 

\end{proof}

\remove{
\proof {\em (Sketch.)}
We  construct a gap-preserving reduction from the 
{\sc   IM} problem.
The 
theorem follows from the corresponding inapproximability result proved in \cite{KKT-15}.
Consider an instance of the
IM problem consisting in  a graph $G = (V,E)$, 
 threshold function $t(\cdot)$, and bound $\beta$.
Let $V=\{v_1,\ldots,v_n\}$,
 we build a graph $G' = (V',E')$ as follows:

	\begin{itemize}
		\item Replace each $v_i\in V$ by a gadget $G'_i$ consisting in a star  in which the node set is 
		$V'_i=\{v_{i,0},v_{i,1},\ldots, v_{i,n}\}$ and the center $v_{i,0}$ is connected to each of  the other nodes  $v_{i,1},\ldots, v_{i,n}$.
		Formally,	
		
		 $ - V'=\bigcup_{i=1}^n V'_i=\{v_{i,j}\ |\ 1\leq i\leq n,\, 0\leq j\leq n\}$ 
		
  $ - E'=\{(v_{i,0},v_{\ell,0}) | 1\leq i<\ell\leq n,\, (v_i,v_\ell) \in E\} \bigcup \{ (v_{i,0},v_{i,j})|  i, j=1,\ldots, n\}$;
		\item the node $v_{i,0}$ has threshold $\ta(v_{i,0})=t(v_i)$, while each other node 
		$v_{i,j}\in V'_i$  with $j\geq 1$ has $\ta(v_{i,j})=1$, for $i=1,\ldots, n$.
	\end{itemize}

	\noindent
 Notice that  $G$ corresponds
to the subgraph of $G'$ induced by the set $\{v_{i,0}  | 1\leq i \leq n \}$.
 Hence, for each star $G'_i$ in $G'$, the center $v_{i,0}$ plays the role
of $v_i$ in $G$. 
Moreover, it is worth mentioning that during an
evangelization process in $G'$,  if the  node $v_{i,0}$ in the  gadget $G'_i$ is evangelizing, then all the nodes in $V'_i$
will be influenced   within the next round.
The essence of our  proof  of the theorem consists in the following claim, that will be fully justified in  Appendix \ref{App1}:
{\em  There exists a seed set $S \subseteq V$  for $G$ of size $|S| = \beta$ s.t. $|\Ag{G}{S}|\geq k$ 
iff there exists a seed set $S'\subseteq V'$ for $G'$ of cardinality $|S'|=\beta$ such that $|\Ig{G'}{S'}| \geq k (n+1).$ } \qed
}

\section{Parameterized complexity } \label{parametr}
	
A parameterized computational problem with input size $n$ and  parameter $\t$ is called {\em fixed parameter tractable (FPT)} 
 if it can be solved in time $f(\t) \cdot n^c$, where $f$ is a function  depending on $\t$ only,  and $c$ is a constant \cite{DF}.
\remove{
When the problem turns out to be fixed parameter intractable with respect to $\t$ then the problem is said to be $W[P]$-complete.
In case the problem is polynomial solvable but the degree of the polynomial of the running time depends on the parameter $\t$, then the problem is in the complexity class XP when parameterized by $\t$.
}
In this section we study the effect of some parameters on the computational complexity of the \MSS and  PES  problems.

\subsection{Parameterization of \MSS\ with Neighborhood Diversity.}
We consider the decision version ($\alpha,\beta$)-\MSS\!\! of the problem.
   It takes in input a
graph $G=(V,E)$, node thresholds $\ti:V\to \{0,1,2,\ldots\}$ and $\ta:V\to\{0,1,2,\ldots\}$, and integer bounds $\alpha,\beta \in \N$, and  
 asks if there exists  a seed set $S\subseteq V$  such that $|S| \leq \beta$ and $|\I{S}|\geq \alpha$.
\\
We notice that by conveniently choosing the thresholds $\ta$ and  $\ti$, the \MSS problem specializes in problems whose parameterized complexity is well known. 
When $\ti(v) = \ta(v)$ for each $v \in V$ and $\alpha = |V|$, the problem becomes the {\em target set selection}  \cite{Chen-09}. This problem is $W[2]$-hard\footnote{See  \cite{DF} for  definitions of $W[2]$-hardness, $W[1]$-hardness and the class XP.} with respect to the solution size $\beta$ \cite{NNUW}, it is XP when parameterized with respect to the treewidth \cite{BHLM-11}, and 
is $W[1]$-hard with respect to the parameters treewidth, cluster vertex deletion number and pathwidth \cite{BHLM-11,Chopin-12}.
Moreover, the  target set selection problem becomes fixed-parameter tractable with respect to the single parameters: Vertex cover number, feedback edge set size, bandwidth \cite{Chopin-12,NNUW}. 
In general  when $\ti(v) = \ta(v)$ for each $v \in V$,  the ($\alpha,\beta$)-\MSS\!\! problem   has no parameterized approximation algorithm with respect to the parameter $\beta$ and it is $W[1]$-hard with respect to the combined parameters $\alpha$ and 
$\beta$ \cite{BCNS}.
{Moreover, the  target set selection problem is W[1]-hard parameterized
by the neighborhood diversity of the input graph \cite{DKT16}.}

{In the following, we study the parameterized complexity of the ($\alpha,\beta$)-\MSS\!\!   problem for  the general case   $\ti(v)\neq \ta(v)$.  
We concentrate our attention on two parameters: the neighborhood diversity and the vertex cover size.}

The {\em neighborhood diversity} was  first introduced in \cite{L}. It  has recently received  
 particular attention \cite{DKT16,FGKKK,G,GR} also 
due to its  property  of being computable  in polynomial time 
\cite{L}---unlikely other parameters, including treewidth, rankwidth, and vertex cover.

\begin{definition}\label{diversity}
Given a graph $G=(V,E)$,  two nodes $u,v\in V$  have the same  
{\em type} iff $N(v) \setminus \{u\} = N(u) \setminus \{v\}$.
The graph $G$ has {\em neighborhood diversity} $\t$, 
if there exists a partition of $V$ into at most $\t$ sets, $V_1,V_2, \ldots, V_\t$, 
s.t. all the 
nodes in   $V_i$ have the same type, for  $i=1,\ldots, \t$. 
The family 
$\V=\{V_1,V_2, \ldots, V_\t\}$ is called  the {\em type partition} of $G$.
\end{definition}

Let $G=(V,E)$ be a graph with type partition  $\V=\{V_1,V_2, \ldots, V_\t\}$. 
By Definition \ref{diversity}, 
 each $V_i$ induces either a clique or an independent set in $G$.
For each  $V_i,V_j\in \V$, we get that  either each node in 
$V_i$ is a neighbor of each node in $V_j$ or no node in $V_i$ 
has a neighbor  in $V_j$.
Hence, all the nodes  in the same $V_i$   have the same neighborhood  $N(V_i)$---excluding the nodes in $V_i$ itself. 

We present a FPT-algorithm for the 
\MSS problem with parameters $\t$ and $\beta$.
%
At the end of the evangelization  process in $G$ starting at $S$, we identify 
the number of evangelists that are neighbors of (all) the nodes in $V_i$ and  define 
for each $i=1,2,\ldots,t,$

$$N_i(S) = \begin{cases}{
|\A{S} \cap N(V_i)|}&{\mbox {if $V_i$ is an independent set,}}\\
 {|\A{S} \cap (V_i \cup N(V_i))|}&{\mbox {if $V_i$ is a clique.}}\end{cases}$$

\noindent
It is easy to see that a node  $u \in V_i-\A{S}$ is influenced if $\ti(u) \leq N_i(S) $.

The proposed algorithm will be based on the following Lemma.
\begin{lemma}\label{lemma1-n-div}
Let $S'$ be a seed set  for $G$.
Let $u,v \in V_i$ be s.t. {$u \in S'$ and $v \not \in S'$, }and consider the set $S''=(S'-\{u\}) \cup \{v\}$. 
If $\ti(v) > N_i(S')$ then $\I{S'} \subseteq \I{S''}$.
\end{lemma}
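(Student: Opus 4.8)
The plan is to exploit that $u$ and $v$, having the same type, are interchanged by the transposition $\sigma=(u\ v)$, which is an \emph{automorphism} of $G$: for every $w\notin\{u,v\}$ we have $w\in N(u)\iff w\in N(v)$ (since $N(u)\setminus\{v\}=N(v)\setminus\{u\}$), and the edge $uv$, if present, is fixed. As $S_0:=S'\setminus\{u\}$ contains neither $u$ nor $v$, this gives $S''=\sigma(S')$. I would first record the closed form $\I{S}=S\cup\{x:\ |N(x)\cap\A{S}|\ge\ti(x)\}$, which holds because $\A{S,\tau}$ is nondecreasing and the influence test is monotone in the evangelist set. I would then extract the consequence of the hypothesis $\ti(v)>N_i(S')$: since $v\notin S'$, a short computation from the definition of $N_i$ shows that $v$ has at most $N_i(S')$ evangelist neighbours throughout the $S'$-process (exactly $N_i(S')$, minus one in the clique case if $v$ were counted), so $v$ can never meet $\ta(v)\ge\ti(v)>N_i(S')$ and is not influenced either; hence $v\notin\I{S'}$ and, a fortiori, $v\notin\A{S'}$.

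The core step is the containment $\sigma(\A{S'})\subseteq\A{S''}$, which I would prove by induction on the round $\tau$ in the form $\sigma(\A{S',\tau})\subseteq\A{S''}$. The base case $\tau=0$ is $\sigma(S')=S''\subseteq\A{S''}$. For the inductive step, any $x$ newly evangelized at round $\tau$ satisfies $x\neq u$ (as $u$ is a seed) and $x\neq v$ (as $v\notin\A{S'}$), so $\sigma$ fixes $x$ and, being an automorphism, fixes $N(x)$ setwise; therefore $|N(x)\cap\A{S''}|\ge|N(x)\cap\sigma(\A{S',\tau-1})|=|N(x)\cap\A{S',\tau-1}|\ge\ta(x)$, where the first inequality is the inductive hypothesis. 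By the fixed-point property of $\A{S''}$ this forces $x\in\A{S''}$, closing the induction. In particular $\A{S'}\setminus\{u\}\subseteq\A{S''}$ and $v\in\A{S''}$.

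Once the evangelist sets are related, influence transfers for every node other than $u$. Indeed, for $w\notin\{u,v\}$ the equivalence $w\in N(u)\iff w\in N(v)$ yields $|N(w)\cap\sigma(\A{S'})|=|N(w)\cap\A{S'}|$, so $|N(w)\cap\A{S''}|\ge|N(w)\cap\A{S'}|$; hence any such $w$ that is influenced in the $S'$-process (or that lies in $S_0\subseteq S''$) is influenced in the $S''$-process. Combined with $v\notin\I{S'}$, this already gives $\I{S'}\setminus\{u\}\subseteq\I{S''}$.

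The remaining, and main, obstacle is to reabsorb the swapped-out seed $u$. Here I would use $\A{S'}\setminus\{u\}\subseteq\A{S''}$ together with $v\in\A{S''}$ to compute $|N(u)\cap\A{S''}|\ge N_i(S')$ in both cases: when $V_i$ is a clique, $v\in N(u)$ supplies exactly the unit lost by excluding $u$ from its own neighbourhood count, while when $V_i$ is independent $N(u)=N(V_i)$ already meets all of $\A{S'}\cap N(V_i)$. The delicate point is to certify that this lower bound actually reaches $\ti(u)$, i.e.\ that $u$ crosses its influence threshold (or becomes an evangelist) in the $S''$-process; this is precisely where the placement of $u$ as a removable seed interacts with the hypothesis on $v$, and it is the step I expect to require the most care. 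Securing $u\in\I{S''}$ then combines with the previous paragraph to conclude $\I{S'}\subseteq\I{S''}$.
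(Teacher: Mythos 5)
Your argument is sound --- and closely parallels the paper's intent --- up to and including the containment $\I{S'}\setminus\{u\}\subseteq\I{S''}$: the round-by-round induction giving $\sigma(\A{S',\tau})\subseteq\A{S''}$, the transfer of influence to every $w\notin\{u,v\}$ via $\sigma(N(w))=N(w)$, and the observation that $v\notin\I{S'}$ are all correct. The gap is exactly the step you flag as ``requiring the most care'': reabsorbing $u$. That step cannot be completed, because the lemma as literally stated (with set containment) is false. Take $G$ to be the path on three vertices $u,w,v$ with edges $uw$ and $wv$, so that $V_i=\{u,v\}$ is an independent type class with $N(V_i)=\{w\}$, and set $\ti(x)=\ta(x)=2$ for all three nodes. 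With $S'=\{u\}$ the process is stationary, $\I{S'}=\A{S'}=\{u\}$, and $N_i(S')=|\{u\}\cap\{w\}|=0<2=\ti(v)$, so the hypothesis holds; yet $S''=\{v\}$ gives $\I{S''}=\{v\}$, which does not contain $u$. Your lower bound $|N(u)\cap\A{S''}|\geq N_i(S')$ is correct, but nothing in the hypotheses forces $N_i(S')\geq\ti(u)$ --- the threshold assumption is on $v$, not on $u$ --- so $u$ need not be influenced in the $S''$-process.

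Two remarks. First, what the paper actually uses downstream (in Lemma \ref{lemmaMD}) is only the cardinality inequality $|\I{S'}|\leq|\I{S''}|$, and that you have in fact already proved: from $\I{S'}\setminus\{u\}\subseteq\I{S''}$, $v\in S''\subseteq\I{S''}$ and $v\notin\I{S'}$ one gets $|\I{S''}|\geq|\I{S'}\setminus\{u\}|+1\geq|\I{S'}|$. So stopping one step earlier turns your write-up into a complete proof of the weakened (but sufficient) statement. Second, the paper's own proof suffers from the same defect: it splits on whether $\ta(u)\leq N_i(S')$, correctly concluding $u\in\A{S''}\subseteq\I{S''}$ in that case, but in the complementary case it asserts $\I{S'}=\I{S''}$ without verifying that $u\in\I{S''}$; the example above falls into that second case. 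You should therefore not try to force $u\in\I{S''}$, but instead restate the conclusion as a cardinality bound.
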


\begin{proof}
Consider   a seed set $S'$ for $G$.
For $u,v \in V_i$  such that  $u \in S'$ and $v \not \in S'$
consider $S''=(S'-\{u\}) \cup \{v\}$. 

It is trivial to see that after the first round of the  evangelization process   with seed set  $S''$, the number of influenced nodes (resp.  of evangelists) in each  $V_i \in \V$ is 
the same as with  seed set $S'$. Namely, 
\begin{equation}\label{first-round}   
|N(v) \cap S'| = |N(u) \cap S''| \mbox{ and  $|N(w) \cap S'| = |N(w) \cap S''|$ for each $w \in V-\{u,v\}$}.
\end{equation}
Let $\ti(v) > N_i(S')$.
Note that since $\ta(v) \geq \ti(v) > N_i(S')$, node $v$ does not take part to make any node an evangelist in the evangelization 
process starting at $S'$. 
To prove the lemma we distinguish two cases according to the value of $\ta(u)$.\\
- If $\ta(u) \leq N_i(S')$ then there exists a round $i$ of the process starting at $S'$ in which $u$ becomes an evangelist, that is, $|N(u) \cap \A{S',i-1}| \geq \ta(u)$. 
Consider now the evangelization  process starting at $S''$.
By (\ref{first-round}), the effect on any node of the process starting at $S''$ at the end of the first round is the same of the process starting at $S'$ at the end of the first round.
Furthermore, till round $i-1$ of the process starting at $S''$, the evangelists and the influenced nodes  are exactly the same of the corresponding ones  of the process starting at $S'$. Hence  at round $i$ of the process starting at $S''$, node $u$ becomes an evangelist and $\A{S',i} \subseteq \A{S'',i}$. 
In the following rounds $j > i$ the relation $\A{S',j} \subseteq \A{S'',j}$ is retained, and at the end of the process we have $\A{S'}\subseteq \A{S''}$. Since $\A{S'}\subseteq \A{S''}$ implies $\I{S'} \subseteq \I{S''}$, the lemma is proved in this case.\\
- Let $\ta(u) < N_i(S')$. 
By (\ref{first-round})   and considering that during 
the  process starting at $S''$, the set of evangelists grows exactly as the set of evangelists  in  the   process starting at $S'$ we have that the evangelization  process starting at $S''$ proceeds exactly as the  process starting at $S'$ and at the end of the process it holds $\A{S'}=\A{S''}$ and $\I{S'}=\I{S''}$. 
\end{proof}

We  now   present our algorithm.
We assume that the nodes of $G$ are sorted  in order of non--increasing evangelization thresholds and consider all the possible $\t$-ples $(s_1,s_2,\ldots,s_\t)$ such that $\sum_{i=1}^{\t} s_i =\beta$.
For each   $\bs=(s_1,s_2,\ldots,s_\t)$ we construct the set 
$S_\bs$ in two steps. In the first step we set
$S_\bs= \cup_{i=1}^{\t} S_i$ 
where $S_i$ is obtained by choosing $s_i$ nodes with the largest evangelization threshold  in $V_i$.
In the second step we first consider the evangelization process in $G$ starting at $S_\bs$ and then we update each $S_i$ by using the nodes that have not been influenced in the process. 
In particular, $S_i$ is updated by replacing as many nodes  as possible among those that  could be influenced (if outside $S_i$)
by nodes that 
cannot be 
influenced. 
The construction of $S_\bs$ is detailed in  algorithm ME-ND($\bs,\V$).
We then consider the evangelization process  in $G$ starting at $S_\bs$ and  get the number  $\alpha_\bs= |\I{S_\bs}|$ of  influenced nodes at the end of the process.
Finally, we determine $\bs' = {\argmax}_{\bs} \ \alpha_\bs$ and compare $\alpha$ with $\alpha_{\bs'}$.
If $\alpha_{\bs'} \geq \alpha$ then we answer {\sc yes} to the  \MSS  question for $G$   with parameters 
$\alpha$ and $\beta$  and  $S_{\bs'}$ is the desired  seed set;
otherwise we answer {\sc no}.

  \begin{algorithm}
\SetCommentSty{footnotesize}
\SetKwInput{KwData}{Input}
\SetKwInput{KwResult}{Output}
\DontPrintSemicolon
\caption{ \ \    ME-ND($\bs,G$) \label{alg1}}
\KwData {A graph $G=(V,E)$, threshold functions  $\ti$ and $\ta$ and  $\bs=(s_1,s_2,\ldots,s_\t)$; a type partition of $G$.}
\KwResult{$S_\bs= \cup_{i=1}^{\t} S_i$, a seed set for $G$ such that for each $i=1,2,\ldots,t,$ $s_i=|S_i|$ }
\setcounter{AlgoLine}{0}
\ForEach {$i=1,\ldots,\t$}{Let $S_i$ be a set of  $s_i$ nodes of $V_i$ with the largest evangelization thresholds (e.g., for any $u\in S_i$ and $v\in V_i-S_i$ it holds $\ta(u)\geq \ta(v)$). }
Set $S_\bs = \cup_{i=1}^{\t} S_i$ and  consider the process  in $G$ starting at $S_\bs$.\\
\ForEach (\tcp*[f]{Update set $S_i$;}){$i=1,\ldots,\t$}{
\While{$(\exists \, u \in S_i, \ti(u) \leq N_i(S_\bs)$  \textbf{\em AND}  
$\exists \, v \in V_i-S_i , \ \ti(v)> N_i(S_\bs))$}
{
$S_i=S_i-\{u\}\cup\{v\}$
}}
 \textbf{return} $S_\bs= \cup_{i=1}^{\t} S_i$
\end{algorithm}

The Lemma \ref{lemmaMD} shows that the algorithm ME-ND provides an optimal seed set according to a fixed $\t$-ple $\bs = (s_1,s_2,\ldots,s_\t)$.

\begin{lemma}\label{lemmaMD}
Let $\t$ be the neighborhood diversity of $G$. For any fixed $\t$-ple $\bs = (s_1,s_2,\ldots,s_\t)$, the algorithm ME-ND($\bs,G$) computes a seed set $S_\bs,$ such that $|\I{S_\bs}|$ is maximum among all the seed set $S$ such that each $|S \cap V_i|=s_i$, for $i=1,\ldots,t$. 
\end{lemma}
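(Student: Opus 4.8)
The plan is to prove the optimality of $S_\bs$ by a class-by-class argument that sandwiches $|\I{S_\bs}|$ between a matching upper and lower bound. First I would record the basic decomposition already hinted at before the lemma: since all nodes of a type class $V_i$ share the same neighbourhood, a non-seed node $u\in V_i$ is influenced exactly when $\ti(u)\le N_i(S)$, while every seed is influenced by fiat. Hence, for every seed set $S$,
\begin{equation*}
|\I{S}\cap V_i| \;=\; \big|\{u\in V_i : \ti(u)\le N_i(S)\}\big| \;+\; \big|\{u\in S\cap V_i : \ti(u) > N_i(S)\}\big|.
\end{equation*}
Writing $p_i(x)=|\{u\in V_i:\ti(u)>x\}|$, the second term is at most $\min\big(s_i,\,p_i(N_i(S))\big)$, so
\begin{equation*}
|\I{S}\cap V_i| \;\le\; |V_i| - p_i(N_i(S)) + \min\big(s_i,\,p_i(N_i(S))\big) \;=:\; g_i\big(N_i(S)\big),
\end{equation*}
and since $p_i$ is non-increasing one checks directly that $g_i$ is \emph{non-decreasing} in $N_i(S)$.

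The next ingredient is a spread-maximization step: among all seed sets $S$ with $|S\cap V_i|=s_i$ for every $i$, the choice $S^\ast$ produced by Step~1 of ME-ND (the $s_i$ nodes of largest $\ta$ in each $V_i$) simultaneously maximizes $a_i(S):=|\A{S}\cap V_i|$ for all $i$, and therefore maximizes every $N_i(S)$. I would establish this by induction on the rounds $\tau$ of the process, comparing $S^\ast$ against an arbitrary competitor $S$. Two facts drive the inductive step: the number of nodes of $V_i$ that have turned evangelist is monotone in the evangelist-pressure its class receives, and, for a \emph{fixed} pressure, seeding the highest-$\ta$ nodes leaves the lowest-$\ta$ nodes free to be converted and thus maximizes that number. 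A simultaneous induction across all classes preserves $a_i(S^\ast,\tau)\ge a_i(S,\tau)$ at every round; the clique case is identical once one notes that a clique node reacts to the external pressure \emph{plus} the internal count $a_i$, which is again monotone. Passing to the final round gives $N_i(S)\le N_i(S^\ast)$ for all admissible $S$, and together with the first paragraph this yields the upper bound $|\I{S}|\le\sum_i g_i\big(N_i(S^\ast)\big)$.

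It then remains to show that ME-ND attains this bound, i.e. $|\I{S_\bs}\cap V_i|=g_i\big(N_i(S^\ast)\big)$ for each $i$. The easy half concerns the update loop: it halts precisely when either no seed of $V_i$ has $\ti\le N_i$, in which case $W_i:=|\{u\in S_i:\ti(u)>N_i\}|=s_i\le p_i(N_i)$, or no non-seed of $V_i$ has $\ti>N_i$, in which case all $p_i(N_i)$ high-$\ti$ nodes are already seeds and $W_i=p_i(N_i)\le s_i$; either way $W_i=\min\big(s_i,p_i(N_i)\big)$, exactly the quantity appearing in $g_i$.

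The crux—and the step I expect to be the main obstacle—is to show that the swaps never change any $N_i$, so that $N_i$ stays equal to the maximal value $N_i(S^\ast)$ throughout the loop (which is what makes the loop condition well defined and keeps the first term of $g_i$ at its maximum). Here I would carry the invariant that every intermediate seed set still attains the maximal $a_j$ of the spread step. Each swap removes a seed $u$ with $\ti(u)\le N_i$ and inserts a non-seed $v$ with $\ti(v)>N_i$, whence $\ta(v)\ge\ti(v)>N_i$; thus $v$ was \emph{not} an evangelist before (its pressure never reaches $\ta(v)$) and becomes one as a seed. If the removed $u$ satisfied $\ta(u)\le N_i$, then $u$ would remain an evangelist by pressure, so $a_i$ would strictly increase—impossible once $a_i$ is already maximal. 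Consequently every swap must move a seed \emph{within} $\{w\in V_i:\ta(w)>N_i\}$, and for such a swap the second case in the proof of Lemma~\ref{lemma1-n-div} gives $\A{S'}=\A{S''}$, so all $a_j$ and hence all $N_j$ are preserved exactly. Substituting $N_i=N_i(S^\ast)$ and $W_i=\min\big(s_i,p_i(N_i(S^\ast))\big)$ into the decomposition yields $|\I{S_\bs}\cap V_i|=g_i\big(N_i(S^\ast)\big)$, matching the upper bound term by term and establishing the claimed optimality.
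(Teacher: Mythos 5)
Your proof is correct, but it takes a genuinely different route from the paper's. The paper argues by exchange: it takes an arbitrary optimal seed set $S'$ with $|S'\cap V_i|=s_i$ and iteratively trades a node $u\in S'_i-S_i$ for a node $v\in S_i-S'_i$ without decreasing $|\I{\cdot}|$, invoking Lemma \ref{lemma1-n-div} whenever the incoming node satisfies $\ti(v)>N_i(S')$, and otherwise concluding directly from the containment $\I{S'}\cap V_i\subseteq S'_i\cup\{w\in V_i : \ti(w)\le N_i(S_\bs)\}\subseteq V_i\cap\I{S_\bs}$; crucially, it merely asserts (``it is possible to see'') that the top-$\ta$ choice maximizes each $N_i$. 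You instead build an explicit certificate of optimality: the per-class identity $|\I{S}\cap V_i|=|V_i|-p_i(N_i(S))+|\{u\in S\cap V_i:\ti(u)>N_i(S)\}|$, the monotone upper bound $g_i$, a round-by-round induction showing that seeding the $s_i$ largest-$\ta$ nodes in every class simultaneously maximizes all the $N_i$, and a verification that the swap loop attains $g_i(N_i(S^\ast))$ exactly while leaving every $N_j$ invariant. Your route buys a closed-form expression for the optimum value and, more importantly, an actual proof of the spread-maximization claim that the paper leaves unjustified; the paper's route buys brevity and uses Lemma \ref{lemma1-n-div} as a black box rather than reopening its case analysis. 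One simplification: your contradiction-via-maximality step (ruling out $\ta(u)\le N_i$ for a swapped-out seed $u$) is not needed, because $S_i$ initially consists of the $s_i$ largest $\ta$-values and any insertable $v$ satisfies $\ta(v)\ge\ti(v)>N_i$, so every node that is ever a seed in $V_i$ has $\ta>N_i$ and all swaps automatically fall into the evangelist-preserving case.
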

\proof
Let $S_\bs= \cup_{i=1}^{\t} S_i$ be the seed set returned by the algorithm
ME-ND($\bs,G$).
Let now $S'$ be any optimal seed set satisfying the decomposition  $\bs$, i.e.,  $|\I{S'}|$ is maximum among all the seed set $S$ such that each $|S \cap V_i|=s_i$, for $i=1,\ldots,t$. 
We show that 
{$|\I{S_\bs}|\geq  |\I{S'}|$.} 
	To this aim, 
	 we  iteratively transform each  $S'_i$ into $S_i$ by trading a node $u\in S'_i-S_i$ for a node $v\in S_i-S'_i$
	without decreasing the number of informed nodes.

\begin{itemize}
	\item  If we can choose $v$ such that     $\ti(v) > N_i(S')$  then by  Lemma \ref{lemma1-n-div} we get that 
	$S''=(S'-\{u\}) \cup \{v\}$ has $|\I{S''}| \geq |\I{S'}|$.
	\item  Suppose now that  for any choice of  $v$ it holds  $\ti(v) \leq  N_i(S')$.
	It is possible  to see that  the sets $S_i$  (both as initially chosen at line 2 of the algorithm as
	well as after each update) 
	maximize the number of evangelized nodes in each $V_i$ and 
	$N_i(S_\bs)\geq N_i(\overline{S})$, for any seed set $\overline{S}$ such $|\overline{S}\cap V_i|=s_i$, for $i=1,\ldots,t$. 
	Hence, 
	$$N_i(S_\bs)\geq N_i({S'}), \qquad  \mbox{ for $i=1,\ldots,t.$}$$
	Furthermore, the construction of the sets $S_i$ excludes the possibility that $\ti(u)> N_i(S_\bs)$ and $\ti(v)\leq N_i(S_\bs)$ (cfr. lines 5-6 of the algorithm).
	Therefore, we can assume that $\ti(u)<  N_i(S_\bs)$ and $\ti(v)\leq  N_i(S')\leq N_i(S_\bs)$ for each $u\in S'_i-S_i$  
	and $v\in S_i-S'_i$. In such a case, we have 
	
	$\qquad\quad \I{S'}\cap V_i\subseteq S_i'\cup \left\{ w\in V_i\ |\ \ti(w)\leq N_i(S_\bs)\right\}\subseteq V_i\cap \I{S_\bs}.$

	\noindent
	Hence,  $\I{S'}\subseteq \I{{S_\bs}}$ and we can straight conclude that $|\I{{S_\bs}}| \geq |\I{{S'}}|$.\qed
	\end{itemize}

\begin{theorem}\label{teorema-n-div}
Let $\t$ be the neighborhood diversity of $G$.
It is possible to decide the ($\alpha,\beta$)-\MSS  question 
  in time $O(n \t \ 2^{\t \log (\beta +1) })$.
\end{theorem}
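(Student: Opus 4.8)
The plan is to reduce the decision problem to a bounded enumeration over the ways of distributing the $\beta$ seeds among the $\t$ classes of the type partition $\V=\{V_1,\ldots,V_\t\}$, invoking Lemma~\ref{lemmaMD} to obtain, for each distribution, the best achievable influence. Concretely, I would enumerate every $\t$-tuple $\bs=(s_1,\ldots,s_\t)$ with $s_i\in\{0,1,\ldots,\beta\}$ and $\sum_i s_i=\beta$; for each such $\bs$ I run ME-ND$(\bs,G)$ to build $S_\bs$, simulate the evangelization process to read off $\alpha_\bs=|\I{S_\bs}|$, and finally answer \textsc{yes} iff $\max_\bs\alpha_\bs\ge\alpha$. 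Correctness is immediate from Lemma~\ref{lemmaMD}: any seed set $S$ with $|S|=\beta$ induces the decomposition $s_i=|S\cap V_i|$, and for that decomposition ME-ND already attains $|\I{S_\bs}|\ge|\I{S}|$; since each $S_\bs$ is itself a feasible seed set of size $\beta$, the value $\max_\bs\alpha_\bs$ equals the optimum. (By monotonicity of the spreading process one may assume $|S|=\min(\beta,n)$, so restricting to $\sum_i s_i=\beta$ loses nothing.) The number of tuples examined is at most $(\beta+1)^\t=2^{\t\log(\beta+1)}$, which furnishes the exponential factor of the claimed bound.

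The crux, and the factor $n\t$, lies in showing that for a fixed $\bs$ both ME-ND and the evaluation of $\alpha_\bs$ run in time $O(n\t)$. The structural fact I would exploit is that all nodes of a class $V_i$ share the same external neighborhood, so the number of evangelist neighbors of a node $u\in V_i$ at any round depends only on the per-class evangelist counts $a_j$ of the classes adjacent to $V_i$ (plus $a_i$ itself, less one, when $V_i$ is a clique). Consequently the as-yet-unseeded nodes of a class turn evangelist in nondecreasing order of $\ta$, so, after a one-time $O(n\log n)$ sort of the thresholds, the entire state is captured by the $\t$ counts $(a_1,\ldots,a_\t)$. I would simulate the synchronous dynamics round by round, keeping for each class its external evangelist count $E_i$, namely the sum of $a_j$ over the classes $V_j$ adjacent to $V_i$; whenever a node of class $j$ becomes an evangelist I add $1$ to $E_i$ for every class $V_i$ adjacent to $V_j$, costing $O(\t)$ per activation and hence $O(n\t)$ over the at most $n$ activations. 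The new evangelists of a round are then found by advancing, for each class, a pointer along its sorted threshold list as the effective count $E_i$ (or $a_i+E_i$) grows; the pointers advance $O(n)$ times in total, and the per-round scan over the $\t$ classes, across the at most $n+1$ rounds, adds a further $O(n\t)$.

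Finally I would assemble the pieces. Step~2 of ME-ND simulates the process in $O(n\t)$; the values $N_i(S_\bs)$ are read off from the final per-class counts in $O(\t)$; and the update loop of Step~3 performs at most $\sum_i\min(s_i,|V_i|-s_i)\le\beta\le n$ exchanges, each realizable in $O(1)$ on the presorted lists, its guard using the fixed quantities $N_i(S_\bs)$. Running the process once more from the returned $S_\bs$ yields $\alpha_\bs=|\I{S_\bs}|$ in a further $O(n\t)$. Thus a single tuple is processed in $O(n\t)$, and multiplying by the at most $2^{\t\log(\beta+1)}$ tuples (the one-time threshold sort and the computation of the type partition, polynomial by~\cite{L}, being subsumed) gives the stated $O(n\t\,2^{\t\log(\beta+1)})$ running time. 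I expect the principal difficulty to be the amortized $O(n\t)$ simulation: bounding the number of rounds by $O(n)$ and arranging the incremental cross-class bookkeeping so that the per-round work stays $O(\t)$ rather than degrading to $O(\t^2)$. Correctness, by contrast, is largely inherited from Lemma~\ref{lemmaMD}.
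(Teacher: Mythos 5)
Your proposal follows exactly the paper's argument: enumerate the $\t$-tuples $\bs$ with $\sum_i s_i=\beta$ (the paper bounds their number by $\binom{\beta+\t-1}{\t-1}<2^{\t\log(\beta+1)}$, you by $(\beta+1)^\t$, which is the same bound), invoke Lemma~\ref{lemmaMD} for per-tuple optimality, and charge $O(n\t)$ to each tuple for running ME-ND and simulating the process. The only difference is that you spell out the per-class bookkeeping behind the $O(n\t)$ simulation, which the paper simply asserts; your account of it is correct.
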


\begin{proof}
For any possible 
 $\bs = (s_1,s_2,\ldots,s_\t)$, denote by $S_\bs$ the output of Algorithm ME-ND($\bs,G$).
We then consider the evangelization process  in $G$ starting at $S_\bs$ and  get the number  $\alpha_\bs= |\I{S_\bs}|$ of  influenced nodes at the end of the process, which, thanks to Lemma \ref{lemmaMD}, is optimal for the partitioning $\bs$.
Finally, we determine $\bs' = {\argmax}_{\bs} \ \alpha_\bs$ and comparing  $\alpha$ with $\alpha_{\bs'}$ we are able to answer the ($\alpha,\beta$)-\MSS  question.

Now we evaluate the running time of the algorithm.
The number of all the possible $\t$-ples $\bs =(s_1,s_2,\ldots,s_\t)$ such that 
$\sum_{i=1}^{\t} s_i =\beta$ is ${\beta + \t -1 \choose \t -1} < 2^{\t \log (\beta +1) }$. 
Moreover, one needs $O(n \t)$ time  to construct $S_\bs$ and  $O(n \t)$ time  to determine $|\I{S_{\bs}}|$.  
Hence, the  time for deciding if a ($\alpha,\beta$)-\MSS for $G$ exists is $O(n \t \ 2^{\t \log (\beta +1) })$. 
\end{proof}

\smallskip

Noticing that  the type partition  $\V$ can be obtained in polynomial time, one has that the 
 ($\alpha,\beta$)-\MSS  problem is  in the class FPT when parameterized by the neighborhood diversity $\t$ and the solution size $\beta$.

\remove{
REMOVE
We stress that our algorithm holds for arbitrary  thresholds. If all the nodes have the same  influence threshold we can  show  that the problem is  FPT with respect to the single parameter $\t$.

\begin{corollary}\label{cor-n-div-const}      
Let $\t$ be the neighborhood diversity of $G$.
Let $\ti(v)=c$, for each $v \in V$.
It is possible to decide the  ($\alpha,\beta$)-\MSS  question   in time $O(n \t \ 2^{\t \log c \t})$.
\end{corollary}

\begin{proof}{}
If $\beta \geq c\t$ then we define $S$ as the subset of $V$ obtained by choosing $c$ nodes from each $V_i \in \V$. By Definition \ref{diversity}, it easy to see that each vertex in $V-S$ has at least $c$ neighbors in $S$.
Hence at the end of the first round of the process  in $G$ starting at $S$ all the nodes in $V$ are influenced. Since $|V| \geq \alpha$, the theorem is proved in this case.
If $\beta < c \t$ then we use Theorem \ref{teorema-n-div} and we have the result.
\end{proof}

{A  more efficient algorithm can be get in case $\ti(v)=c$ for each $v \in V$ and $\beta =O(\t)$.}

\begin{theorem}\label{teorema-n-div-const}    
Let $\t$ be the neighborhood diversity of $G$.
Let $\ti(v)=c$, for each $v \in V$, {and $\beta =O(\t)$.}
It is possible to decide the ($\alpha,\beta$)-\MSS  question  for $G$  in time $2^{O(\t \log \t)}$.
\end{theorem}

\begin{proof} {??Sketch.??}
{Using the hypothesis that $\ti(v)=c$, for each $v\in V$, we have:
\\
{\bf (a)} For each seed set $S$ and $V_i\in\V$ we have that either  $V_i\subseteq \I{S}$ or no node outside $S$ gets informed, that is, $\I{S}\cap V_i=V_i\cap S$. 
\\
{\bf (b)} We can  limit ourselves to search a solution for  $G$  among the seed sets that, inside  each $V_i$ contain the nodes with the highest evangelization thresholds  (that is, if there exists $S$ s.t. $u,v\in V_i$,  $t_E(u)>t_E(v)$, $v\in S$, and $u\not\in S$ then $|\I{S-\{v\}\cup\{u\}}|\geq |\I{S}|$).
 \\
{\bf (c)} We can assume $c\leq \beta$; otherwise  each $v\in V$ has $\ta(v)\geq\ti(v)=c> \beta$ and no node outside the $\beta$ ones in the seed set can get influenced.\\
The above  considerations imply  that 
the algorithm can proceed  as follows.
For each $V_i\in\V$, we fix a subset $V'_i \subseteq V_i$ containing 
$ \min \{|V_i|, \beta+1\}$  vertices of $V_i$ with the highest evangelization thresholds.
We can then look for  a  ($\alpha ,\beta$)-\MSS  for $G$
 by selecting vertices  from the subsets   $V'_i$ only. 
Namely, we determine
  $\overline{\bs} = {\argmax}_{\bs} |\I{S_\bs}|$ where the maximization is over all the vectors $\bs = (s_1,s_2,\ldots,s_\t)$ with
$\sum_{i=1}^{\t} s_i =\beta$ and  $S_\bs= \bigcup_{i=1}^{\t} S_i$ with  each  $S_i$  containing the $s_i$ vertices of $V'_i$ with the highest evangelization thresholds.
If $|\I{S_{\overline{\bs}}}|\geq \alpha$ then we answer {\sc yes} to the  {\sc ($\alpha,\beta$)-\MSS}  question for $G$ and  $S_{\overline{\bs}}$ is the  {\sc ($\alpha,\beta$)-\MSS} for $G$, otherwise we answer {\sc no}.
The number of all the possible $\t$-ple $\bs =(s_1,s_2,\ldots,s_\t)$ such that $\sum_{i=1}^{\t} s_i =\beta$ is upper bounded by  $2^{\t \log (\beta +1) }$. 
%
For each seed $S_{\bs}$, it is possible to show that $O(\beta \t^2)=O(\t^3)$ is sufficient to obtain the number of influenced nodes at the end of the process  in the subgraph $G'$ induced by $\bigcup_{i=1}^t  V'_i$. 
Recalling  (c) 
and   noticing that (a) implies that 
   $V_i\subseteq \Inf_{G}[S_\bs]$ iff $V'_i\subseteq \Inf_{G'}[S_\bs]$ (the other possible case is  $\Inf_{G}[S_\bs]\cap V_i=V_i\cap S=V'_i\cap S=\Inf_{G'}[S_\bs]\cap V_i$), we can easily compute $|\Inf_{G}[S_\bs]|$ from 
	$|\Inf_{G'}[S_\bs]|$. Hence that   the running time of the algorithm is $O( \t^3 \ 2^{\t \log {\beta}})$.  
}
\end{proof}

END REMOVE
}

\medskip

Theorem \ref{teorema-n-div} can be used to also have  FPT linear time algorithms with vertex cover size as parameter  for  ($\alpha,\beta$)-\MSS. 
Indeed, graphs of bounded vertex cover have bounded neighborhood diversity---while 
the opposite is not true since large cliques have  neighborhood diversity  1 \cite{GR}. 
\begin{theorem}\label{v-cover}
Given a vertex cover of $G$ of size $\ell$, it is possible to decide the  ($\alpha,\beta$)-\MSS question in time $O(n (2^\ell+\ell)2^{(2^\ell+\ell) \log \ell })$.
\end{theorem}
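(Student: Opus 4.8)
The plan is to derive the theorem from Theorem \ref{teorema-n-div} by bounding the neighborhood diversity of $G$ in terms of the vertex cover size $\ell$. First I would prove that any graph admitting a vertex cover $C$ with $|C|=\ell$ has neighborhood diversity $\t\le 2^\ell+\ell$. The argument is purely structural: since $C$ is a vertex cover, the set $I=V\setminus C$ is an independent set and every $u\in I$ satisfies $N(u)\subseteq C$. By Definition \ref{diversity}, two nodes $u,u'\in I$ have the same type exactly when $N(u)=N(u')$, because $u,u'$ are non-adjacent and hence the corrections $N(u)\setminus\{u'\}$ and $N(u')\setminus\{u\}$ are vacuous. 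Thus the vertices of $I$ fall into at most $2^\ell$ types, one per distinct subset of $C$ arising as a neighborhood, and the at most $\ell$ vertices of $C$ contribute at most $\ell$ further types, giving $\t\le 2^\ell+\ell$. I would also observe that the type partition $\V$ is obtained directly from $C$ in linear time by grouping the vertices of $I$ according to their neighborhood in $C$, so no separate diversity computation is required.

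Having fixed this partition, the next step is simply to run the algorithm underlying Theorem \ref{teorema-n-div}. Substituting $\t\le 2^\ell+\ell$ into its running time $O(n\,\t\,2^{\t\log(\beta+1)})$ already yields correctness and membership in FPT for the combined parameter; what remains in order to match the claimed bound is to replace the factor $\log(\beta+1)$ by $\log\ell$.

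For that replacement my plan is to dispose of large budgets separately, showing that we may assume $\beta<\ell$. The point is that when $\beta\ge\ell$ the instance becomes easy: for every $u\in I$ we have $N(u)\subseteq C$, so the number of evangelist neighbors $u$ can ever see is $|N(u)\cap\A{S}|\le|C|=\ell$ and is maximized once the whole of $C$ is turned into evangelists, which is affordable as soon as $\beta\ge\ell$. Seeding $C$ realizes the largest possible evangelist set inside $C$, hence the largest possible cascade into $I$, and the only remaining freedom is to spend the residual budget on the "stubborn" vertices $u$ with $\ti(u)=d(u)+1$, which can be influenced solely by being seeded, while accounting for the at most $2^\ell$ ways the cascade can unfold inside $C$. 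This reduces the case $\beta\ge\ell$ to a computation that fits within the stated time bound, whereas for $\beta<\ell$ we have $\beta+1\le\ell$, so $2^{\t\log(\beta+1)}\le 2^{\t\log\ell}$ and the bound of Theorem \ref{teorema-n-div} becomes $O\big(n(2^\ell+\ell)\,2^{(2^\ell+\ell)\log\ell}\big)$, exactly as claimed.

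I expect the main obstacle to be precisely this budget reduction: proving rigorously that once $\beta\ge\ell$ seeding essentially all of $C$ is optimal, and handling the subtlety that a cover vertex which the cascade would evangelize anyway need not be seeded, so that the freed budget can be reallocated to directly influence stubborn vertices. The neighborhood-diversity bound itself is routine; the care lies in guaranteeing that the enumeration never needs to consider more than $\ell^{\t}$ essentially distinct seed-count vectors.
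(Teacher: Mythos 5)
Your core argument coincides with the paper's: both proofs reduce to Theorem \ref{teorema-n-div} by bounding the neighborhood diversity of a graph with vertex cover number $\ell$ by $2^\ell+\ell$ (the paper cites \cite{GR} for this bound, while you reprove it directly; your structural argument via the at most $2^\ell$ distinct neighborhoods $N(u)\subseteq C$ of the independent vertices plus at most $\ell$ singleton classes for the cover is correct), and both obtain the claimed exponent by first restricting to $\beta<\ell$, so that $\log(\beta+1)\le\log\ell$.

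The divergence is in how the case $\beta\ge\ell$ is dispatched, and here your proposal has a genuine hole. The paper's argument is a one-liner: seed the vertex cover itself; every vertex outside the cover then has all of its neighbors among the seeds, so after one round the whole graph is evangelized, $|\I{S}|=|V|\ge\alpha$, and the answer is {\sc yes}. You instead set up an optimization over which cover vertices to seed and how to reallocate leftover budget to ``stubborn'' vertices with $\ti(u)=d(u)+1$, and you explicitly leave its key claim unproved. Your instinct is not baseless --- the problem statement does allow $\ti(v)=\ta(v)=d(v)+1$, in which case seeding the cover does \emph{not} influence every vertex and the paper's one-liner is itself too quick --- but as a proof your treatment does not close the case: seeding all of $C$ plus leftover stubborn vertices is not optimal in general (a cover vertex with $\ta=0$ need not be seeded, freeing budget for another stubborn vertex), you concede as much, and the proposed repair of ``enumerating the $2^\ell$ ways the cascade can unfold inside $C$'' is neither carried out nor shown to determine, within the stated time, the cheapest seed choice inside $C$ realizing each cascade. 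Either adopt the paper's simple disposal of $\beta\ge\ell$ (accepting its implicit assumption that thresholds do not exceed degrees) or actually complete the enumeration argument; as written, this case is the missing piece, while everything else matches the paper.
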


\begin{proof}
Let \vc$(G)$ be a vertex cover of $G$ with $|$\vc$(G)|=\ell$.
If $\beta \geq \ell$ then we can use \vc$(G)$ as seed set. 
Indeed, since the nodes in $V - $\vc$(G)$ are independent, 
after one round of the evangelization process in $G$ 
starting at \vc$(G)$ all the nodes in $V$ are evangelist.
Hence, since $|V| \geq \alpha$, we have proved the 
theorem for $\beta \geq \ell$.\\
Let $\beta < \ell$. Since $G$ has vertex cover size $\ell$, it cannot have a type partition with more than $2^\ell+\ell$ sets
\cite{GR}. Hence, we use  Theorem \ref{teorema-n-div} with $t \leq 2^\ell+\ell$  and get the result.
\end{proof}

\subsection{Parameterization of PES with with Treewidth.}

Roughly speaking, the treewidth  measures the 
``tree-likeness'' of a given graph, in particular  any  
tree has  treewidth  1. 
 We generalize  the results given in  \cite{BHLM-11} for the target set selection problem.
We  design an algorithm for the {\sc  Perfect Evangelic Set (PES)} problem 
that runs in $n^{O(w)}$, where $w$ is the treewidth of the input graph.
If all the nodes have the same  influence threshold we obtain  that the problem is FPT.

\begin{definition}\label{def-tw}
A {\em tree decomposition} of a graph G is a pair $(\T , \X )$,
where $\X$ is a family of subsets of $V(G)$, and $\T$ is a
tree over $\X$ , satisfying the following conditions:

1. \  $\cup_{X \in \X} G[X] = G$, and \quad 
2. \ $\forall v \in V (G)$,  $\{X \in \X \ |\ v \in X\}$ is connected in
$T$. \\
A tree decomposition $(\T , \X )$ of a graph $G$ is {\em nice}
if $\T$ is rooted, binary, each node $X \in\X$ has exactly $w$ vertices, 
and is of one of the following three types:

\vspace*{-0.2truecm}
\begin{itemize}
\item  {\em Leaf node.} $X$ is a leaf in $\T$  and consists of $w$ pairwise
non-adjacent vertices of $G$.
\item  {\em Replace node.} $X$ has one child $Y$ in $\T$, s.t.
$X -Y = \{u\}$ and $Y - X = \{v\}$ for  
$u \neq v$.
\item  {\em Join node.} $X$ has two children $Y$ and $Z$ in
$\T$ with $X = Y = Z$.
\end{itemize}
\vspace*{-0.2truecm}

The {\em width} of T is $\max_{X \in \X}|X| -1$. The {\em treewidth}
of $G$ is the minimum width over all tree (nice) decompositions
of $G$.
\end{definition}

\noindent
The algorithm follows a dynamic programming approach computing a table, for each node $X$ of a nice tree decomposition of $G$, that depends on the pair of thresholds of the vertices in $X$. 
 Each entry in the table stores the smallest seed set for the subgraph $G[X]$ of $G$ induced by the vertices of the subtree rooted at $X$.
The desired seed set for $G$ is the one corresponding to the root node of the tree decomposition of $G$.
The proof follows the lines of the one   in \cite{BHLM-11} for the target set selection problem
(e.g. in the special case $\ta=\ti$), except for the role played by vertices that need to be influenced but not
 evangelized and by the influence thresholds in computing the entries of the table for each node $X$. 
We can prove the following result whose  proof is omitted since, as said before, 
it is essentially patterned after the arguments of Section 3 of \cite{BHLM-11}.

\begin{theorem}\label{teo-tw}
In graphs of treewidth $w$ the  PES problem can be solved in $n^{O(w)}$ time.
\end{theorem}

\remove{

Roughly speaking, the treewidth  measures the 
``tree-likeness'' of a given graph, in particular  any  
tree has  treewidth  1. 
 We generalize  the results given in  \cite{BHLM-11} for the target set selection problem.
We  design an algorithm for the {\sc  Perfect Evangelic Set (PES)} problem 
that runs in $n^{O(w)}$, where $w$ is the treewidth of the input graph.
If all the nodes have the same  influence threshold we obtain  that the problem is FPT.

\begin{definition}\label{def-tw}
A {\em tree decomposition} of a graph G is a pair $(\T , \X )$,
where $\X$ is a family of subsets of $V(G)$, and $\T$ is a
tree over $\X$ , satisfying the following conditions:

1.  $\cup_{X \in \X} G[X] = G$, and 

2.  for all $ v \in V (G)$,  the set $\{X \in \X \ |\ v \in X\}$ is connected in
$T$. 

\noindent
A tree decomposition $(\T , \X )$ of a graph $G$ is {\em nice}
if $\T$ is rooted, binary, each node $X \in\X$ has exactly $w$ vertices, 
and is of one of the following three types:
\begin{itemize}
\item  {\em Leaf node.} $X$ is a leaf in $\T$  and consists of $w$ pairwise
non-adjacent vertices of $G$.
\item  {\em Replace node.} $X$ has one child $Y$ in $\T$, s.t.
$X -Y = \{u\}$ and $Y - X = \{v\}$ for  
$u \neq v$.
\item  {\em Join node.} $X$ has two children $Y$ and $Z$ in
$\T$ with $X = Y = Z$.
\end{itemize}
The {\em width} of T is $\max_{X \in \X}|X| -1$. The {\em treewidth}
of $G$ is the minimum width over all tree (nice) decompositions
of $G$.
\end{definition}
The algorithm follows a dynamic programming approach computing a table, for each node $X$ of a nice tree decomposition of $G$, that depends on the pair of thresholds of the vertices in $X$. 
 Each entry in the table stores the smallest seed set for the subgraph $G[X]$ of $G$ induced by the vertices of the subtree rooted at $X$.
The desired seed set for $G$ is the one corresponding to the root node of the tree decomposition of $G$.
The proof follows the lines of the one   in \cite{BHLM-11} for the target set selection problem
(e.g. in the special case $\ta=\ti$), except for the role played by vertices that need to be influenced but not
 evangelized and by the influence thresholds in computing the entries of the table for each node $X$. 
We can prove the following result.

\begin{theorem}\label{teo-tw}
In graphs of treewidth $w$ the  PES problem can be solved in $n^{O(w)}$ time.
\end{theorem}

\def\OGXd{OPT_{G_X^d}}
Given a nice tree decomposition $(\T , \X )$ of $G$, we will assume
that $\T$ is rooted at some arbitrary $R \in \X$.  For a node
$X \in \X$ , let $\T_X$ denote the subtree of $\T$ rooted at $X$,
and let $\X_X$ denote the collection of nodes in this tree,
including $X$ itself. The subgraph $G_X$ associated with
$X$ in $\T_X$ is defined by $\GX = \cup_{Y \in \X_X} G[Y]$.
The vertices of $X$ are called the {\em boundary} of $\GX$.\\
In order to describe our algorithm we need some further definition. 
Let $[n] = \{0, 1, \ldots , n\}$.
\begin{definition}
An {\em evangelization threshold vector}, $\taa \in [n]^w$,
(resp. {\em influence threshold vector}, $\tii \in [n]^w$)
 is a vector
with a coordinate for each boundary vertex in $X$.
Let  $\taa(v)$ (resp. $\tii(v)$) denote the coordinate in $\taa$ (resp. $\tii(v)$) corresponding
to the boundary vertex $v \in X$, and $\ta$ (resp. $\ti$)
denote the original
evangelization threshold (resp. influence threshold) function of $G$, the subgraph $\GX(\taa,\tii)$ is defined
as the graph $\GX$ with  thresholds:\\
$\bullet$ \  $\taa(v)$ and $\tii(v)$ for any boundary vertex $v \in X$, and\\
$\bullet$ \  $\ta(u)$ and $\ti(v)$ for all other vertices $u \not \in X$.
\end{definition}
\begin{definition}
An {\em evangelization order} is a function
$\IA : X \rightarrow \{0,1, \ldots,w-1, w\}$, that for any $v \in X$, is defined as follows: \\
- if $\IA(v) = i$, for $0 \leq i \leq w-1$, then $v$ is active and $i$ denotes the relative iteration in the boundary $X$ at which $v$ is activated;\\
- if $\IA(v) = w$ then  $v$ is influenced but not an evangelist.
\end{definition}
Given a seed set $S \subseteq V(\GX)$ and an evangelization order 
$\IA$, the $\IA$-{\em constrained evangelization process of $S$ in $\GX(\taa,\tii)$} is
defined similarly to the normal evangelization process of
$S$ in $\GX(\taa,\tii)$, except that a boundary vertex $v$ becomes an evangelist
at iteration i only if all boundary vertices $u$ with
$\IA(u) < \IA(v)$ are evangelists at iteration $i-1$,
and only if all other boundary vertices $w$ with $\IA(w) = \IA(v)$ 
will also become evangelized at  iteration $i$.
 This includes all boundary vertices $u \in S$.
 Note that all vertices that are evangelized by $S$ and so all the vertices that are influenced by $S$ in a normal evangelization process get evangelized and influenced
 in an $\IA$-constrained process for some evangelization order $\IA$.

A seed set $S \subseteq V(\GX)$, with $|S| \leq \beta$, that influences {\em all vertices} in $\GX(\taa,\tii)$ in a $\IA$-constrained evangelization process is said a 
{\em perfect seed set conforming to $\IA$}.

Our algorithm computes for each graph $\GX$, corresponding to node $X$ of $\T$, $\OGX$ that depends on the influence and evangelization threshold vectors $\tii,\taa$ of the boundary vertices of $\GX$ and on the evangelization order $\IA$ on the boundary vertices.
In particular, the entry $\OGX[\taa,\tii,\IA]$ stores the smallest seed set for $\GX(\taa,\tii)$ conforming to $\IA$.\\
We stress that  some evangelization order $\IA$ could  {\em  not  be valid}; this occurs when for some boundary vertex $u$ with 
$Ne_X(\IA,u)= | \{z |\  z\  \mbox{is a neighbor of $u$ in $\GX$ and} \ \IA(z) \leq w-1 \}|$
 it holds that:\\
- $0 \leq \IA(u) \leq w-1$ and $ Ne_X(\IA,u)< \taa(u)$, i.e., $u$ should be an evangelist but the number of its evangelist neighbors in $\GX$ is less than $\taa(u)$;\\
- $\IA(u) = w$ and either $Ne_X(\IA,u)\geq \taa(u)$ or $Ne_X(\IA,u)< \tii(u)$, i.e., $u$ should be influenced but not an evangelist and the number of its evangelization  neighbors in $\GX$ is either at least $\taa(u)$ or less than $\tii(u)$.\\
We assume that $\OGX[\taa,\tii,\IA]$ is not taken in consideration when $\IA$ is not valid.

\begin{lemma}\label{tabella}
Let $t_{max}$ be the maximum evangelization and/or influence threshold of any vertex in the network.
The number of different entries in $\OGX$ is upper bounded by $t_{max}^{O(w)}\leq n^{O(w)}$.
\end{lemma}
Recall that $G_{R} = G$ where $R$ is the root of $\T$. 
Therefore, if we compute the $OPT_{G_R}$
table for the root $R$ we can
determine the optimal perfect seed set for $G$. 
The algorithm will compute the $OPT_{G_X}$ tables in bottom-up fashion, where the computation at the leaves will be done by brute-force.
According to Lemma \ref{tabella} , and since $\T$
has $O(wn)$ nodes, to obtain time bound in Theorem \ref{teo-tw}
all that is required is to compute  $\OGX$ for each  $X \in \X$ in $n^{O(w)}$ time.
Since the graphs at the leaves only have $w$ vertices, this can be done in $n^{O(w)}$ time for a leaf node $X$.  In the following we give details on how to compute $\OGX$ table in case $X$ is an internal node of $\T$ from the table(s) correspond to its child(ren) in $\T$ . 

\noindent
\underline{{\em Leaf Nodes.}}
If $X$ is a leaf in $\T$, then $\GX$ is a graph with $w$ isolated vertices,
and so any perfect seed set for $\GX$ must include
all vertices with  influence  threshold greater than 0.
Furthermore, all vertices will get evangelized regardless of the 
evangelization order we impose on the boundary. 
Hence, for each leaf node $X$ we have
$$\OGX [\taa,\tii,\IA] = X - \{v \ |\  \tii(v) = 0\}.$$

\smallskip

\noindent
\underline{{\em Replace Nodes.}}
Suppose $X$ is a replace node with
child $Y$ in $\T$. That is, $\GX$ is obtained by adding a
new boundary vertex $u$ to $\GY$, and removing another
boundary vertex $v$ from the boundary (but not from
$\GX$). Recall that by the second condition of Definition \ref{def-tw}, $u$ can
only be adjacent to other boundary vertices of $\GX$.
To compute $\OGX$, we need to distinguish two cases according to the value of the evangelization order $\IA$ related to vertex $u$.\\
$\bullet$ \ Let $0 \leq \IA(u) \leq w$ and $ Ne_X(\IA,u) \geq \taa(u)$.
In this case the order $\IA$ requires that  $u$ is an evangelist.
Let $d$ denote the number of these neighbors of $u$ in $\GX$,
and assume that they are ordered. Also, let $\GX^i$
for $i = 0, \ldots, d$ denote the subgraph of $\GX$ obtained by
adding the edges between $u$ and  all of its neighbors
in $X$, up-to and including the $i$-th neighbor. To compute
$\OGX$, we will actually compute $\OGXi$
in increasing values of $i$, letting $\OGX= \OGXd$.

When $i = 0$, $u$ is isolated, and thus it must be included
in any perfect seed set when it has evangelization threshold 
greater than 0. For any  threshold vector $\tii$ (resp. $\taa$), let 
$\tii^{uv}$ (resp. $\taa^{uv}$) denote the influence
(resp. evangelization) threshold vector  obtained by setting: 
$\tii^{uv}(w) = \tii(w)$  (resp. $\taa^{uv}(w) = \taa(w)$) for all $w \neq u$, and 
$\tii^{uv}(v) = \tii(u)$ (resp. $\taa^{uv}(v) = \taa(u)$). 
For an order $\IA$ for $X$, let $\IA^{uv}$ denote the set of all orderings 
$\IA'$ for $Y$ with $\IA'(w) = \IA(w)$ for all boundary vertices $w \neq u, v$. 
Observe that  $\IA'(v) \neq \IA(u)$ is allowed.
According to the above, when $X$ is a replace node, we
get for $i = 0$:
$$\OGXo[\taa,\tii,\IA] = \min_{\IA' \in \IA^{uv}}
\begin{cases}
{\OGY[\taa^{uv},\tii^{uv},\IA']}& {\mbox{if $\taa(u)=0$}}\\
{\OGY[\taa^{uv},\tii^{uv},\IA']\cup \{u\}} & {\mbox{if $\taa(u)\neq 0$}}   
\end{cases}$$

If $i>0$ then $\GX^i$ is obtained from $\GX^{i-1}$ by connecting $u$ to some boundary vertex $z \in X$.
For any  threshold vector $\tii$ (resp. $\taa$), let $\tii^{u-}$  (resp.  $\taa^{u-}$) denote the  threshold vector obtained by setting $\tii^{u-}(u) = 
\max \{\tii(u) - 1, 0\}$ (resp.  $\taa^{u-}(u) = \max \{\taa(u) - 1, 0\}$),  and all remaining thresholds the same. Define $\tii^{z-}$ (resp. $\taa^{z-}$) similarly.
Since the edge $(u, z)$ can only influence $v$ if $\IA(z) < \IA(v)$, and vice-versa,
we have:
$$\OGXi[\taa,\tii,\IA] = \begin{cases}
{\OGXii[\taa,\tii,\IA]}& {\mbox{if $\IA(z) \leq w-1$ and $\IA(z)=\IA(u)$}}\\
{\OGXii[\taa^{u-},\tii,\IA]}& {\mbox{if $\IA(z) \leq w-1$ and $\IA(z)<\IA(u)$}}\\
{\OGXii[\taa^{z-},\tii,\IA]} & {\mbox{if $\IA(z) \leq w-1$ and $\IA(z)>\IA(u)$}}\\
{\OGXii[\taa,\tii^{z-},\IA]} & {\mbox{if $\IA(z) = w$.}}     
\end{cases}$$
$\bullet$ \ Let $\IA(u)=w$ and $\tii(u) \leq Ne_X(\IA,u)< \taa(u)$.
In this case the order $\IA$ requires that  $u$ is influenced but not an evangelist.
For an order $\IA$ for $X$, let $\IA^{uv}$ defined as in the above case.
We have
$$\OGX[\taa,\tii,\IA] =\min_{\IA' \in \IA^{uv}} \OGY[\taa^{uv},\tii^{uv},\IA'].$$

\smallskip

\noindent
\underline{{\em Join Nodes.}}
Let $X$ be a join node with children $Y$
and $Z$ in $\T$. Recall that $\GY$ and $\GZ$ are two subgraphs
whose intersection is exactly their boundary $Y = Z$, and
$\GX$ is obtained by taking the union of these two subgraphs.
Observe that this means that there are no edges between $V(\GY )- Y$
 and $V(\GZ )- Z$ in $\GX$. 
For a boundary vertex $v \in X$, let $N_X(v)$ denote
the set of boundary vertices that are connected to $v$ in
$\GX$. For $v \in X$, and an evangelization order $\IA$, we define
$\IA^{<v}$ to be the set of all boundary vertices $u$ such that
$\IA(u) < \IA(v)$.
For the influence threshold vectors $\tii^Y$ , $\tii^Z$, the evangelization threshold vectors $\taa^Y$ , $\taa^Z$ and an
evangelization order $\IA$, we define the threshold vectors 
$\taa(v)=\taa^Y \oplus_\IA \taa^Z (v)= \taa^Y(v)+\taa^Z(v) -|N_X(v) \cap \IA^{<v}|$ and
$\tii(v)=\tii^Y \oplus_\IA \tii^Z (v)= \tii^Y(v)+\tii^Z(v) -|N_X(v) \cap \IA^{<v}|$
for every $v \in X$. 
To compute $\OGX [\taa,\tii, \IA]$ we first compute
$$[(\overline{\taa^Y},\overline{\taa^Z}), (\overline{\tii^Y}, \overline{\tii^Z})] = \argmin_{\stackrel{\taa^Y \oplus_\IA \taa^Z=\taa}{\tii^Y \oplus_\IA \tii^Z=\tii}} |\OGY[\taa^Y,\tii^Y,\IA] \cup \OGZ[\taa^Z,\tii^Z,\IA]|$$
then we obtain
$ \OGX [\taa,\tii,\IA]= \OGY[\overline{\taa^Y},\overline{\tii^Y},\IA] \cup \OGZ[\overline{\taa^Z},\overline{\tii^Z},\IA].$
\\
To see the correctness of the above equation we consider that any
perfect seed set $S$ for $\GX(\taa,\tii)$ which conforms to 
$\IA$ can be decomposed into two subsets $S_Y = S \cap V(\GY)$
and $S_Z = S\cap V(\GZ)$ which influence in an $\IA$-constrained
evangelization process all vertices in $\GY(\taa^Y,\tii^Y)$ 
and $\GZ(\taa^Z,\tii^Z)$,
for some pair of evangelization threshold vectors $\taa^Y, \taa^Z$ and some pair of influence threshold vectors $\tii^Y, \tii^Z$ for which
$\taa(v)=\taa^Y \oplus_\IA \taa^Z (v)$ and 
$\tii(v)=\tii^Y \oplus_\IA \tii^Z (v)$. 
The converse is also true; any pair
of perfect seed sets for $\GY(\taa^Y,\tii^Y)$ 
and $\GZ(\taa^Z,\tii^Z)$ conforming
to $\IA$, can be united into a perfect seed set for
$\GX(\taa^Y \oplus_\IA \taa^Z,\tii^Y \oplus_\IA \tii^Z)$, also conforming to $\IA$.

}

\section{ Exact Polynomial Time Algorithms for MES}\label{sec-trees}
{In this section we show that the MES problem is exactly solvable in polynomial time on complete graphs and trees.}

\subsection{Complete Graphs}

Since the neighborhood diversity of a complete graph is $1$ we already know that the \MSS problem is  solvable in polynomial time
 on complete graphs.  
However, by observing that  when $t=1$, then  $\bs =(s_1)$ is a singleton and there a single $1$-tuple available 
(i.e., $s_1=\beta$), we can design an  algorithm to solve the \MSS problem 
that is is much simpler than the one described in Section \ref{parametr}. 
We show below the MES-K algorithm that represents  a specialized, and more efficient, 
version  of the  ME-ND algorithm to complete graphs.
By Lemma \ref{lemmaMD}, that gives the correcteness of the algorithm, we can prove the following Theorem.
\begin{theorem}\label{teoKfrac} 
 In a complete graph with $n$ nodes, the \MSS problem can be solved 
in 
$O(n)$ time.
\end{theorem}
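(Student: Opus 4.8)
The plan is to specialize the general neighborhood-diversity machinery to the single type class of a complete graph and then argue that every phase of the specialized routine MES-K runs in linear time once the thresholds are bucket-sorted. First I would record the structural facts that drive the simplification: $K_n$ has neighborhood diversity $\t=1$, its unique type class $V_1=V$ is a clique, and $V_1\cup N(V_1)=V$, so that $N_1(S)=|\A{S}|$ is exactly the total number of evangelists produced by $S$. Because $\t=1$, the constraint $\sum_{i=1}^{\t}s_i=\beta$ admits the single tuple $\bs=(\beta)$ (and spending the whole budget is optimal by monotonicity of the process, $S'\subseteq S\Rightarrow\I{S'}\subseteq\I{S}$), so the outer loop of Theorem \ref{teorema-n-div} degenerates to one call of ME-ND. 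Its correctness is already granted by Lemma \ref{lemmaMD}, so what remains is purely a running-time analysis.

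The crux is that on a complete graph the dynamics collapse to a one-dimensional cascade: for $u\notin\A{S,\tau-1}$ we have $|N(u)\cap\A{S,\tau-1}|=a_{\tau-1}$, where $a_{\tau-1}=|\A{S,\tau-1}|$, so $u$ becomes an evangelist at round $\tau$ exactly when $\ta(u)\le a_{\tau-1}$. Choosing the $\beta$ seeds to have the largest $\ta$ maximizes this cascade (low-threshold nodes get evangelized for free, so seeds are better spent on high-threshold nodes), matching the seed-selection step of ME-ND. I would then compute the fixed point $a^\ast=|\A{S}|=N_1(S)$ by a single greedy pass over the non-seed nodes in nondecreasing $\ta$ order: starting from $a=\beta$, evangelize the current node $v$ and increment $a$ as long as $\ta(v)\le a$, and stop at the first failure. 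Since later nodes have even larger thresholds they cannot be evangelized once the scan halts, so this pass correctly returns $a^\ast$ and the final evangelist set $S\cup\{u:\ta(u)\le a^\ast\}$.

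With $a^\ast$ in hand, the influenced set is simply $\I{S}=S\cup\{u:\ti(u)\le a^\ast\}$, so $|\I{S}|$ is obtained by one $O(n)$ scan. The update phase of ME-ND then reduces to counting: a seed $u$ with $\ti(u)\le a^\ast$ stays influenced even if released, whereas a non-seed $v$ with $\ti(v)>a^\ast$ (equivalently $\ti(v)>N_1(S)$, the hypothesis of Lemma \ref{lemma1-n-div}) is otherwise lost, so swapping such a pair leaves $a^\ast$ unchanged and adds $v$ to $\I{S}$, raising $|\I{S}|$ by one; performing $\min(p,q)$ such swaps, where $p,q$ count the eligible seeds and non-seeds, takes $O(n)$. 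The main obstacle is the running time rather than correctness (the latter being inherited from Lemmas \ref{lemma1-n-div} and \ref{lemmaMD}): a naive round-by-round simulation could cost $\Omega(n^2)$, so I must collapse the cascade into the single monotone pass above and, to beat $O(n\log n)$, replace comparison sorting of $\ta$ and $\ti$ by counting sort, which is legitimate because in $K_n$ every threshold lies in $\{0,1,\ldots,n\}$. Combining the four linear-time phases — bucket sort, seed selection, cascade computation, and the counting-based update — yields the claimed $O(n)$ bound.
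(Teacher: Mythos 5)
Your proposal is correct and follows essentially the same route as the paper: specialize the neighborhood-diversity algorithm to the single type class of $K_n$ (one tuple $\bs=(\beta)$), take the $\beta$ nodes of largest evangelization threshold as seeds, inherit correctness from Lemma \ref{lemmaMD}, and then perform the influence-threshold swaps of algorithm MES-K. You additionally spell out the linear-time implementation (counting sort of thresholds and the single monotone pass computing $|\A{S}|$), details the paper asserts but does not write down, so your write-up is if anything more complete.
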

  
%
%

\begin{algorithm}
\SetCommentSty{footnotesize}
\SetKwInput{KwData}{Input}
\SetKwInput{KwResult}{Output}
\DontPrintSemicolon
\caption{ \ \   \textbf{Algorithm} \MSS-K($K,\beta$) \label{algk}}
\KwData { A clique $K=(V,E)$, threshold functions  $\ti$ and $\ta$,   budget $\beta\leq|V|$. }
\KwResult{ $S$ a seed set for $K$ such that $|S|\leq \beta.$ } 
\setcounter{AlgoLine}{0}
	 Let $X=\{v_{1},v_{2},\ldots,v_{\beta}\}$ be a set of $\beta$ nodes of $V$ with the largest evangelization thresholds (i.e., for any $u\in X$ and $v \in V-X$ it holds $t_E(u)\geq t_I(v)$) and  $\eta^*=|\A{X}|$\\
	Set $S=X$\\
	
	\While{$(\exists \, u \in S, \ \ti(u) \leq \eta^*$  \textbf{\em AND}  
$\exists \, v \in V-S , \ \ti(v)> \eta^*)$}
{
$S = S - \{{u}\}\cup\{v\}$
}
return $S$
\end{algorithm}

\subsection{Trees}

{Thanks to Theorem \ref{teo-tw}, we know that the PES problem is solvable in polynomial time on graphs having constant treewidth. In the special case of trees, we are able to solve in polynomial time also the MES problem. In the following we  give a dynamic programming algorithm that proves Theorem \ref{theorem-tree}.}

\begin{theorem}\label{theorem-tree}
The  \MSS problem with bound $\beta$ can be 
solved in  time $O(\min\{n\Delta^2\beta^3,n^2\beta^3\})$ on  any tree with $n$ nodes and maximum degree $\Delta$.
\end{theorem}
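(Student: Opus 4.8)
The plan is to give a bottom-up dynamic program over the tree, rooted at an arbitrary node $r$. For a node $v$ let $T_v$ be the subtree hanging at $v$; the only link between $T_v$ and the rest of the tree is the edge to the parent $p$ of $v$, so \emph{the entire external influence on $T_v$ is captured by a single bit: whether $v$ is eventually made an evangelist with the help of $p$}. This is the structural fact that makes trees tractable. Call $v$ an \emph{independent evangelist} if $v\in S$ or $v$ reaches its threshold $\ta(v)$ using only evangelists inside $T_v$; a \emph{dependent evangelist} if it becomes an evangelist only once $p$ is one; and otherwise $v$ is a non-evangelist (which may or may not be influenced). The key structural lemma I would prove first is: \emph{for every seed set $S$ and every edge $(v,p)$, at most one endpoint uses the other's evangelism to cross its own evangelization threshold}, so that the classification above is acyclic and consistent. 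Concretely, letting $m_v$ be the number of children of $v$ that are independent evangelists, one shows $v$ is an evangelist in $\A{S}$ iff $m_v\ge \ta(v)$ (then $v$ is independent and \emph{helps} $p$), or $m_v=\ta(v)-1$ and $p$ is an independent evangelist (then $v$ is dependent and does \emph{not} help $p$); and $v$ is influenced iff the number of its evangelist neighbours is at least $\ti(v)$. Acyclicity of the tree rules out two adjacent nodes mutually bootstrapping each other, which is exactly the case the definition of ``independent'' forbids.

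With the lemma in hand I would define, for each node $v$, each budget $j\in\{0,\dots,\beta\}$, each bit $x\in\{0,1\}$ recording whether $p$ is assumed to be an evangelist, and each bit $h\in\{0,1\}$ recording whether $v$ is an independent evangelist, the table
\[
D_v(j,x,h)=\max\big\{\,|\I{S'}\cap V(T_v)|\ :\ |S'|=j,\ \text{interface of $v$ equals }(x,h)\,\big\}.
\]
Carrying the bit $x$ in the state (rather than merely $v$'s own status) is essential: when $p$ becomes an evangelist it can cascade arbitrarily deep into $T_v$ by turning $v$ into a dependent evangelist and thereby waking up $v$'s own dependent descendants, and that whole cascade is already stored inside $D_{v}(\cdot,1,\cdot)$. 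For a leaf, $D_v$ is read off directly from $\ti(v),\ta(v)$ and the bits $x,h$ (handling the degenerate $\ti,\ta\in\{0,1\}$ cases as ``activated immediately''). For the answer we return $\max_{j\le\beta,\,h}D_r(j,0,h)$, since the root has no parent.

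The recurrence at an internal node $p$ with children $v_1,\dots,v_{d_p}$ is computed in \emph{two branches}, according to whether $p$ is ultimately an evangelist. In the ``$p$ evangelist'' branch I evaluate every child at $x=1$ and perform a $(\max,+)$ knapsack over children tracking two indices, the total budget $j$ and the count $c=\sum_i h_i$ of independent-evangelist (i.e.\ \emph{helping}) children; I then keep only those $(j,c)$ for which $p$ is genuinely an evangelist given its own seed status and its own parent bit $x_p$ (namely $p\in S$, or $c\ge\ta(p)$, or $c=\ta(p)-1$ with $x_p=1$), set $h_p=[\,p\in S \text{ or } c\ge\ta(p)\,]$, and add $+1$ for $p$ (which is then influenced). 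In the ``$p$ not evangelist'' branch I evaluate children at $x=0$, keep the $(j,c)$ that leave $p$ a non-evangelist, and add $+1$ for $p$ iff $c+[x_p=1]\ge\ti(p)$; here every evangelist child is necessarily independent, so $c$ is also the correct influence count. The correctness proof amounts to checking that these two branches faithfully mirror the dynamics certified by the structural lemma.

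The main obstacle is precisely this correctness argument: ruling out the circular ``both-need-each-other'' activation across the parent edge, and proving that the binary bit $x$ (rather than exact activation rounds) suffices — which holds because a node is pushed by its parent at most once and the process always runs to completion inside a subtree regardless of \emph{when} the push arrives. For the running time, the table $D_v$ has $O(\beta)$ entries, and the per-node merge is a knapsack over the budget ($O(\beta)$) and over the helping-child count $c\le d_p\le\Delta$; bookkeeping the two branches and the $(\max,+)$ convolution over the $d_p$ children, and charging the influence/evangelization thresholds (each $\le\Delta$), gives per-node cost $O(\Delta^2\beta^3)$. Summing over the $n$ nodes yields $O(n\Delta^2\beta^3)$, while bounding the convolution work globally by $\sum_p d_p^2=O(n^2)$ yields $O(n^2\beta^3)$; taking the better of the two gives the claimed $O(\min\{n\Delta^2\beta^3,\,n^2\beta^3\})$ bound.
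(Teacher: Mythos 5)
Your proposal is correct and follows essentially the same route as the paper: a bottom-up dynamic program on the rooted tree whose per-node state records the budget, whether the parent is assumed to become an evangelist, and whether the node becomes an evangelist without the parent's help, merged by a $(\max,+)$ knapsack over the children that tracks the budget split and the number of ``helping'' (independently evangelized) children. The paper encodes the same information as six tables per node (agnostic / influenced-only / evangelist, each also with \emph{residual} thresholds $\ti(v)-1,\ta(v)-1$ standing in for your parent bit $x$), and your explicit no-mutual-bootstrapping lemma is exactly what the paper enforces implicitly by counting only non-residual evangelist children toward $\ta(v)$; the complexity analysis coincides.
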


\noindent
 The rest of this section is devoted to the description and analysis of the
  algorithm proving   Theorem \ref{theorem-tree}.
	Let $T = (V,E)$ be a tree rooted  at any  node $r$ and 
denote by $T(v)$ the  subtree rooted at $v$, for   $v\in V$.
The algorithm   makes a postorder traversal of  the input  tree $T$.
 For each node $v$, the algorithm solves all possible   instances of the 
{\MSS} problem  on the subtree $T(v)$, with bound   $b\in\{0,1,\ldots, \beta\}$.  
Moreover, in order to compute these values 
one has to consider---for the root node $v$ of $T(v)$---not only the original thresholds $\ti(v)$ and $\ta(v)$ of $v$, but also 
the decremented  values $\ti(v)-1$ and $\ta(v)-1$ which we call the {\em residual thresholds}.
%
For each node $v\in V$ and  integer $b\geq 0$ we define the following quantities:
%
\begin{eqnarray}
\label{eq-case1} \n{v}{b}&&\mbox{is the maximum number of nodes that can be influenced in $T(v)$, }
\\ \nonumber 
&&\mbox{assuming that at most $b$ of the nodes  in $T(v)$ belong to the seed set,}
\\ \nonumber
&& \mbox{  if $v$ is still agnostic  at the end of the evangelization process;} 
\end{eqnarray}
\begin{eqnarray}
\label{eq-case2}  \nn{v}{b}&&\mbox{is the maximum number of nodes that can be influenced in $T(v)$}
\\ \nonumber 
&& \mbox{assuming that at most $b$ of the nodes  in $T(v)$ belong to the seed set,}
\\ \nonumber
&& \mbox{ if, at the end of the  process,  $v$ is influenced but it is not an evangelist;} 
\end{eqnarray}
\begin{eqnarray}
\label{eq-case3}  \nnn{v}{b}&&\mbox{is the maximum number of nodes that can be influenced in $T(v)$}
\\ \nonumber 
&& \mbox{assuming that at most $b$ of the nodes  in $T(v)$ belong to the seed set,}
\\ \nonumber
&& \mbox{  if $v$ is an evangelist at the end of the evangelization process.} 
\end{eqnarray}
Similarly the quantities $\nwI{v}{b}$, $\nnwI{v}{b}$ and $\nnnwI{v}{b}$ represent the same quantities as above but considering the decreased thresholds for $v$ (which may reflect the fact that the parent node of $v$  becomes an evangelist before $v$ itself).
\\
We define the above  quantities be $-\infty$ if any of the constraints is not satisfiable.
For instance, if $v$ is a single node, $b=0$ and $\ta(v)>0$ we set\footnote{Indeed $v$ should be
 an evangelist,  however 
 the budget is $0$ while  the threshold is $>0$.} $\nnn{v}{0}= -\infty$.

\begin{remark}
We  mention that  all the above quantities  are monotonically non-decreasing in $b$  and that $\n{v}{b} \leq \nwI{v}{b}$, $\nn{v}{b} \leq \nnwI{v}{b}$ and $\nnn{v}{b} \leq \nnnwI{v}{b}$.
\end{remark}

The maximum number of nodes in $T$ that can be influenced with any seed set of size  $\beta$
can be then obtained by computing
\begin{equation}\label{eq-mas}
\max\{\n{r}{\beta},\, \nn{r}{\beta},\, \nnn{r}{\beta} \}.
\end{equation}
In order to obtain the value in (\ref{eq-mas}), we compute the quantities\footnote{For the root node $r$, the quantities $\nwI{r}{b}$, $\nnwI{r}{b}$ and $\nnnwI{r}{b}$ are not required.} $\n{v}{b}$, $\nn{v}{b}$, $\nnn{v}{b}$, $\nwI{v}{b}$, $\nnwI{v}{b}$ and $\nnnwI{v}{b}$ for each $v \in V$ and  for each $b=0,1,\ldots,\beta$.

We  proceed postorder fashion on the tree, so that the computation of the various values  for a node $v$ is done after all  the values for $v$'s children are known.

\smallskip
For each leaf node $\ell$ we have the  values below. 
Recall that they refer to the tree
$T(\ell)$ consisting of the single node $\ell$.\\
The node  $\ell$ will be not even influenced only if the budget is not sufficient  to have  $\ell$ in the seed set (e.g. $b=0$) while  the influence threshold is  
$\ti(\ell)>0$. Hence,
\begin{equation}\label{eq-casel}
\n{\ell}{b}=  \begin{cases} 0 & \mbox{ if }  (b=0 \mbox{ AND } \ti(\ell)>0)\\
-\infty &  \mbox{otherwise,} \end{cases} 
\end{equation}
The node $\ell$ gets influenced but  does not become an evangelist   in case the budget is not 
sufficient to  have  $\ell$ in the seed set (e.g. $b=0$) and the evangelization threshold is  
  $\ta(\ell)>0$, but   the influence threshold is  $\ti(\ell)=0$. Hence, 
\begin{equation}\label{eq-casel-1}
\nn{\ell}{b}= \begin{cases}1 & \mbox{ if }  (b=0 \mbox{ AND } \ti(\ell)=0 \mbox{ AND }\ta(\ell)>0)\\
-\infty &  \mbox{otherwise.} \end{cases}   
\end{equation}
The node $\ell$ becomes  evangelist in $T(\ell)$ when either the budget is sufficiently large  to have $\ell$ in  the seed set ($b\geq 1$) or  its evangelization  threshold 
is  $\ta(\ell)=0$. Hence,
\begin{equation}\label{eq-casel-2}
\nnn{\ell}{b}= \begin{cases} 1 & \mbox{ if }  (b\geq 1 \mbox{ OR } \ta(\ell)=0)\\
-\infty &  \mbox{otherwise.} \end{cases}   
\end{equation}
The values for $\nwI{\ell}{b}$, $\nnwI{\ell}{b}$ and $\nnnwI{\ell}{b}$ are computed similarly by using 
on $\ell$ the residual thresholds ($\ti (\ell)-1$ and $\ta(\ell)-1$) instead of
$\ti(\ell)$ and $\ta(\ell)$.

\medskip
We show now that for any internal node $v$    and for any  integer $b\in\{0,\ldots,\beta\}$,  each  of the values $\n{v}{b}$, $\nn{v}{b}$, $\nnn{v}{b}$, $\nwI{v}{b}$, $\nnwI{v}{b}$, and $\nnnwI{v}{b}$  can be computed   in time $O(d^2 b^2)$, where $d$ is the number of children of $v$ in $T$.

We recall that when computing one of the values
$\n{v}{b}$, $\nn{v}{b}$, $\nnn{v}{b}$, $\nwI{v}{b}$, $\nnwI{v}{b}$ or $\nnnwI{v}{b}$, we already have computed all the values for each child $v_i$ of $v$.
We  distinguish two cases: The computation of  the values $\n{v}{b}$ and $\nn{v}{b}$ and the computation of the values $\nnn{v}{b}$.

\noindent \underline{\bf 1. Computation of  $\n{v}{b}$ and $\nn{v}{b}$}.
In this case we  know that $v$ will not become evangelist. Hence, we do not use  
the budget for the node $v$ itself and the computation of 
$\n{v}{b}$ and $\nn{v}{b}$ must  consider all the possible ways in which the whole budget $b$ can be partitioned among $v$'s children.
\begin{fact}\label{lemma1}
It is possible to compute $\n{v}{b}$, $\nn{v}{b}$, $\nwI{v}{b}$ and $\nnwI{v}{b}$, in time $O(d^2 b^2),$ where $d$ is the number of children of $v$.
\end{fact}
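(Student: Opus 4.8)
The plan is to reduce the computation to a knapsack-style dynamic program over the children $v_1,\ldots,v_d$ of $v$. The decisive observation is that when we compute $\n{v}{b}$ or $\nn{v}{b}$ the node $v$ is assumed \emph{never} to become an evangelist; hence $v$ never contributes to the evangelist count of any child, and since in a tree the subtrees $T(v_1),\ldots,T(v_d)$ meet only at $v$, they evolve completely independently of one another. Consequently each child $v_i$ may be treated using its \emph{non-hatted} quantities $\n{v_i}{\cdot}$, $\nn{v_i}{\cdot}$, $\nnn{v_i}{\cdot}$, which by the postorder traversal are already available when $v$ is processed.

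First I would record, for each child $v_i$ and each budget $b_i$, two aggregated values: the best outcome when $v_i$ ends up \emph{not} an evangelist, namely $\max\{\n{v_i}{b_i},\nn{v_i}{b_i}\}$ (contributing $0$ to the number of evangelist children of $v$), and the best outcome when $v_i$ ends up an evangelist, namely $\nnn{v_i}{b_i}$ (contributing $1$ to that number). Then I would build a table $F_j[b'][k]$ equal to the maximum total number of nodes influenced inside $T(v_1)\cup\cdots\cup T(v_j)$ when exactly $k$ of the first $j$ children are evangelists and a total budget of at most $b'$ is spent among them. The table is filled child by child in the standard way: to pass from $F_{j-1}$ to $F_j$ one ranges over the budget $b_j\le b'$ allotted to $v_j$ and over the two aggregated options above, adding $0$ or $1$ to $k$ accordingly and keeping the maximum. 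Monotonicity in the budget (the Remark) makes the ``at most'' semantics consistent.

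Once $F_d$ is available, the four required quantities are obtained by aggregating over the admissible values of $k$, which alone determine whether $v$ stays agnostic or becomes influenced-but-not-evangelist. Since inside $T(v)$ the only evangelist neighbours of $v$ are its evangelist children, $v$ stays agnostic exactly when $k<\ti(v)$ and is influenced-but-not-evangelist exactly when $\ti(v)\le k<\ta(v)$ (adding $1$ for $v$ itself in the latter case); for the hatted quantities the parent of $v$ counts as one further evangelist neighbour, which merely shifts every threshold down by one. Hence the single table serves all four cases,
\begin{align*}
\n{v}{b}&=\max_{k<\ti(v)} F_d[b][k], &
\nn{v}{b}&=1+\max_{\ti(v)\le k<\ta(v)} F_d[b][k],\\
\nwI{v}{b}&=\max_{k<\ti(v)-1} F_d[b][k], &
\nnwI{v}{b}&=1+\max_{\ti(v)-1\le k<\ta(v)-1} F_d[b][k],
\end{align*}
and, because $F_d$ stores every budget, all budgets $b'\le b$ are covered at once. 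For the running time, $F_{j-1}$ has $O(b\,j)$ entries, and updating them while ranging over the $O(b)$ budgets for $v_j$ costs $O(b)$ per entry, so child $j$ contributes $O(b^2 j)$ and $\sum_{j=1}^{d}O(b^2 j)=O(d^2 b^2)$; the final $O(d\,b)$ extractions are dominated by this. I expect the main obstacle to be the independence/decoupling argument — justifying that $v$'s not becoming an evangelist makes the children's non-hatted values the correct inputs — together with the bookkeeping that a single count $k$ of evangelist children simultaneously decides $v$'s fate in all four (hatted/non-hatted, agnostic/influenced) variants.
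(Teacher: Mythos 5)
Your proposal is correct and follows essentially the same route as the paper: a knapsack-style dynamic program over the children indexed by (prefix of children, budget, number of evangelist children), with the non-hatted child quantities as inputs and the final values read off by maximizing over $k<\ti(v)$ resp.\ $\ti(v)\le k<\ta(v)$ (shifted by one for the hatted versions). The only cosmetic difference is that you explicitly reuse one table for all four quantities, whereas the paper recomputes the hatted ones with residual thresholds; the complexity analysis is identical.
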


\begin{proof}
We focus our attention on $\n{v}{b}$ and $\nn{v}{b}$, the remaining values can be computed in the same way but for  using the residual threshold $\ta(v)-1$ and $\ti(v)-1$ on $v$ instead of original ones.
\\
Fix an ordering $v_1,v_2,\ldots, v_d$ of the children of node $v$.
For $i=1,\ldots, d$, $j=0,\ldots, b$ and $k=0, \ldots, d$, let $\Amax_v[i,j,k]$ be the maximum number of nodes that can be influenced  in  the forest consisting of the
 (sub)trees 
 $T(v_1),T(v_2),\ldots,T(v_i)$,  assuming  that these trees contain  at most $j$ seeds altogether  and that $k$  among
their roots $v_1,v_2,\ldots,v_i$ will  
become evangelist---in the respective tree.
By (\ref{eq-case1}) and (\ref{eq-case2})  we have 
\begin{equation}\label{eq-amax}
\n{v}{b}=\max_{k\in\{0,1,\ldots,\ti(v)-1\}}\Amax_v[d,b,k]
\end{equation}

\begin{equation}\label{eq-amax2}
\nn{v}{b}=\max_{k\in\{\ti(v),\ti(v)+1,\ldots,\ta(v)-1\}}\Amax_v[d,b,k]+1.
\end{equation}
We now show how to compute $\Amax_v[d,b,k]$ for $k\in\{0,1,\ldots,\ta(v)-1\}$ by recursively computing the values $\Amax_v[i,j,k]$, for each $i=1,2,\ldots,d$,  $j=0,1,\ldots,b$ and  $k=0, \ldots, \ta(v)-1$.

For $i=1$, we  assign all of the budget to $T(v_1)$ and 
$$\Amax_v[1,j,k]=
\begin{cases}
 \max \{\n{v_1}{j},\nn{v_1}{j}\} & \mbox{if  $k=0$} \\
\nnn{v_1}{j} & \mbox{if  $k=1$} \\
-\infty & \mbox{if  $k>1.$} 
\end{cases}
$$
For $i>1$, we consider 
each $0\leq a \leq j$:  Budget $a$ is assigned to the first $i-1$ trees, while the remaining budget $j-a$ is assigned to $T(v_i)$. Hence,  
$$   \Amax_v[i,j,k]= \max
\begin{cases}
\max_{0\leq a \leq j} \left \{ \Amax_v[i{-}1,a,k] + \max \{\n{v_i}{j-a},\nn{v_i}{j-a}\}\right \}\\
\max_{0\leq a \leq j} \left \{ \Amax_v[i{-}1,a,k{-}1] + \nnn{v_i}{j-a}\right\} 
\end{cases}
$$ 
The computation of  $\Amax_v[\cdot,\cdot,\cdot]$ involves  $O(d^2  b)$ values, each   recursively computed
in time $O(b)$.  Hence 
we are able to compute it---and 
by (\ref{eq-amax}) and (\ref{eq-amax2}) ,  also $\n{v}{b}$ and $\nn{v}{b}$---in time $O(d^2 b^2)$.
\end{proof}

\noindent \underline{\bf 2. Computation of  $\nnn{v}{b}$.}
We focus our attention on $\nnn{v}{b}$, the same reasoning applies  to $\nnnwI{v}{b}$ by using the residual threshold on $v$ instead of the original one.
In this case we   know that $v$ will be  an evangelist and we have two cases to consider depending whether $v$  belongs to the seed set or not. In the following we will analyze the two cases separately. The desired value will be 
\begin{equation} \label{M2}
\nnn{v}{b} =\max
\{
M_1,
M_2 
\}, 
\end{equation}
where   $M_1$ denotes  the value one obtains assuming     $v\in S$ and by
$M_2$ denotes  the value one obtains assuming     $v\notin S$.
\begin{itemize}
\item $v\in S$. 
In this case we assume that $\ta(v)>0$ (otherwise $v$ would  become an evangelist anyhow and it makes no sense to spend part of the budget to evangelize it).
We consider $b\geq 1$ (otherwise $M_1=-\infty$). Since  $v\in S$  the computation of 
$M_1$ must  consider all the possible ways in which the remaining budget $b-1$ can be partitioned among $v$'s children.
\begin{fact}\label{lemma3}
$M_1$ is computable  in time $O(d b^2),$ where $d$ is the number of children of $v$.
\end{fact}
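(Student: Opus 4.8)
The plan is to reduce the computation of $M_1$ to a single knapsack-type dynamic program over the children of $v$. Recall that $M_1$ is the value of $\nnn{v}{b}$ conditioned on $v$ belonging to the seed set. First I would record that, since $v\in S$, node $v$ is an evangelist from round $0$, so one unit of budget is spent on $v$ itself and a residual budget of $b-1$ remains to be split among the subtrees $T(v_1),\ldots,T(v_d)$ rooted at the children of $v$. The key structural fact I would invoke is that, once the state of $v$ is fixed, these subtrees evolve independently of one another: in a tree the only vertex shared by (and the only edge joining) two distinct child subtrees passes through $v$, whose status is already decided, so the process inside each $T(v_i)$ sees $v$ only as a constant external evangelist neighbor.

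Next I would argue that, because $v$ is an evangelist neighbor of every child $v_i$ from the outset, the contribution of $v$ to $v_i$ is captured exactly by lowering the thresholds of $v_i$ by one, that is, by using the residual quantities $\nwI{v_i}{\cdot}$, $\nnwI{v_i}{\cdot}$ and $\nnnwI{v_i}{\cdot}$ in place of $\n{v_i}{\cdot}$, $\nn{v_i}{\cdot}$ and $\nnn{v_i}{\cdot}$. Leaving the final state of each $v_i$ free and maximizing the number of influenced nodes, the best one can extract from $T(v_i)$ with budget $c$ is $g_i(c)=\max\{\nwI{v_i}{c},\nnwI{v_i}{c},\nnnwI{v_i}{c}\}$, so that $M_1 = 1 + \max\sum_{i=1}^d g_i(c_i)$ over all splits $c_1+\cdots+c_d=b-1$, where the leading $1$ counts $v$ itself (which is influenced, being an evangelist). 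I would then solve this by the usual left-to-right knapsack: define $\Bmax_v[i,j]$ as the maximum number of influenced nodes in the forest $T(v_1),\ldots,T(v_i)$ using at most $j$ seeds, with base case $\Bmax_v[0,j]=0$ and recurrence $\Bmax_v[i,j]=\max_{0\le a\le j}\{\Bmax_v[i-1,a]+g_i(j-a)\}$, so that $M_1=1+\Bmax_v[d,b-1]$ (and $M_1=-\infty$ when $b=0$, since $v$ cannot then be seeded, in accordance with the assumptions $\ta(v)>0$ and $b\ge1$ under which $M_1$ is defined).

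For the running time I would note that the table $\Bmax_v[\cdot,\cdot]$ has $O(db)$ entries and each is evaluated in $O(b)$ time by ranging over the split point $a$, which gives the claimed $O(d b^2)$. I expect the one delicate point — and the place deserving the most care — to be the justification that a \emph{two}-dimensional table suffices here, in contrast with the three-dimensional table $\Amax_v[\cdot,\cdot,\cdot]$ of Fact \ref{lemma1}: there the extra coordinate $k$ was needed to count how many children become evangelists, precisely because $v$'s own fate depended on that count, whereas here $v$ is forced to be an evangelist by the seeding, so the number of evangelist children is irrelevant and the factor $d$ arising from $k$ disappears. The remaining routine checks would be that the residual (hat) quantities indeed model the single extra evangelist neighbor correctly, and that the $-\infty$ entries propagate through the recurrence so that infeasible configurations are never selected.
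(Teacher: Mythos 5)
Your proof is correct and follows essentially the same route as the paper: you build the same knapsack-style table (the paper's $\Cmax_v[i,j]$, with base case at $i=1$ instead of your $i=0$), use the residual (hat) quantities for the children to account for $v$ being an evangelist from round $0$, and obtain $M_1=1+\Cmax_v[d,b-1]$ in $O(db^2)$ time. Your added remark on why the evangelist-count coordinate $k$ can be dropped here is a correct observation that the paper leaves implicit.
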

\begin{proof}
Fix an ordering $v_1,v_2,\ldots, v_d$ of the children of node $v$.
For $i=1,\ldots, d$ and $j=0,\ldots, b$  let $\Cmax_v[i,j]$ be the maximum number of nodes that can be influenced  in the  first $i$ subtrees $T(v_1),T(v_2),\ldots,T(v_i)$ assuming that the seed set contains  $v$ and at most $j$ among  the nodes  in such  subtrees.
By (\ref{eq-case3}) we have 
\begin{equation}\label{eq-cmax}
M_1=\Cmax_v[d,b-1]+1.
\end{equation}
We now show how to compute $\Cmax_v[d,b-1]$ by recursively computing the values $\Cmax_v[i,j]$, for each $i=1,2,\ldots,d$ and  $j=0,1,\ldots,b-1$.

For $i=1$, we  assign all of the budget to $T(v_1)$ and 
$$\Cmax_v[1,j]=\max \left\{ \nwI{v_1}{j}, \nnwI{v_1}{j},\nnnwI{v_1}{j}\right\}. 
$$
For $i>1$, we consider  each $a\in \{0, \ldots, j\}$ and assign  budget $a$ to the first $i-1$ subtrees, while the remaining  budget $j-a$ is assigned to $T(v_i)$. Hence,  

$$\Cmax_v[i,j]= \max_{0\leq a \leq j} \Big\{ \Cmax_v[i{-}1,a] + \max 
    \left\{  
    \nwI{v_i}{j-a}, \nnwI{v_i}{j-a},\nnnwI{v_i}{j-a}
    \right\}\Big\}.$$
				
\noindent
The computation of  $\Cmax_v$ uses  $O(d  b)$ values and each one is computed recursively in time $O(b)$.  Hence, 
we are able to compute it  and, 
by (\ref{eq-cmax}),   $M_1$, in time $O(d b^2)$.
\end{proof}

\item  $v\notin S$.
In this case we  know that $v$ will be made an evangelist by the evangelic action of  
(some of) its children. Hence the computation of 
$M_2$ must  consider all the possible ways in which the (whole) budget $b$ can be partitioned among $v$'s children in such a way that at least  $\ta(v)$ of $v$'s children become evangelists.
\begin{fact}\label{lemma4}
 $M_2$ can be computed in time $O(d^2 b^2),$ where $d$ is the number of children of $v$.
\end{fact}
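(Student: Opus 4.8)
The plan is to compute $M_2$ by a dynamic program over the children $v_1,\dots,v_d$ of $v$, in the same spirit as Fact~\ref{lemma1}, but now reflecting that $v$ \emph{does} become an evangelist despite not being seeded. First I would isolate the structural point on which everything hinges. Since $v\notin S$, the only way $v$ can turn into an evangelist is through its own children, and this happens precisely when at least $\ta(v)$ of them become evangelists \emph{before} $v$ does. A child that is one of these ``pushers'' must reach evangelist status on its own, i.e.\ without the extra $+1$ that $v$ would later contribute to its neighbour count; hence its subtree must be accounted for with the \emph{original} thresholds, through $\nnn{v_i}{\cdot}$. Every other child, by contrast, may exploit the fact that $v$ has meanwhile become an evangelist, so its subtree is evaluated with the \emph{residual} thresholds, through $\max\{\nwI{v_i}{\cdot},\nnwI{v_i}{\cdot},\nnnwI{v_i}{\cdot}\}$. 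This is exactly the difference from Fact~\ref{lemma1}, where $v$ never evangelizes and so \emph{all} children use the non-hatted quantities.

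I would then introduce a table $\Dmax_v[i,j,k]$ giving the maximum number of nodes influenced in the forest $T(v_1),\dots,T(v_i)$ when at most $j$ seeds are placed in these subtrees and exactly $k$ of the roots $v_1,\dots,v_i$ become evangelists \emph{autonomously}. Branching, for the root $v_i$, on whether it is counted among the autonomous pushers, the recurrence for $i>1$ is
\begin{equation*}
\Dmax_v[i,j,k]=\max_{0\leq a\leq j}\max
\begin{cases}
\Dmax_v[i{-}1,a,k]+\max\{\nwI{v_i}{j{-}a},\nnwI{v_i}{j{-}a},\nnnwI{v_i}{j{-}a}\}\\
\Dmax_v[i{-}1,a,k{-}1]+\nnn{v_i}{j{-}a}
\end{cases}
\end{equation*}
with base case $\Dmax_v[1,j,0]=\max\{\nwI{v_1}{j},\nnwI{v_1}{j},\nnnwI{v_1}{j}\}$, $\Dmax_v[1,j,1]=\nnn{v_1}{j}$, and $-\infty$ for $k>1$. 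Since $v$ becomes an evangelist exactly when at least $\ta(v)$ children are autonomous evangelists, and $v$ is itself influenced in that event, I would conclude
\begin{equation*}
M_2=1+\max_{\ta(v)\leq k\leq d}\Dmax_v[d,b,k].
\end{equation*}

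For correctness I would argue the two inclusions in the usual way: any process in which the unseeded $v$ evangelizes induces a split of the children into pushers and non-pushers realizing some feasible $(j,k)$ with $k\geq\ta(v)$, and conversely any table configuration with $k\geq\ta(v)$ yields a valid process, since the pushers evangelize $v$, after which $v$ contributes $+1$ to every remaining child and thereby legitimizes the residual thresholds. Here I would invoke the Remark that $\nnn{v_i}{\cdot}\leq\nnnwI{v_i}{\cdot}$ to see that forcing \emph{more} than $\ta(v)$ children to be autonomous is never advantageous, so the maximization over $k\geq\ta(v)$ is sound. The main obstacle is precisely this timing argument: making sure that an ``autonomous'' child genuinely cannot also enjoy $v$'s help (otherwise the evangelization of $v$ would be circular) while every non-pusher can, and that the single counter $k$ faithfully encodes ``at least $\ta(v)$ children become evangelists before $v$''. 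The running time is then immediate: the table has $O(d^2b)$ entries (indices $i\leq d$, $k\leq\ta(v)\leq d{+}2$, $j\leq b$), each computed by an $O(b)$ maximization over $a$, for a total of $O(d^2b^2)$, as claimed.
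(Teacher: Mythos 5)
Your proof follows essentially the same route as the paper's: the same table $\Dmax_v[i,j,k]$ indexed by prefix of children, budget, and number of child-roots that become evangelists in their own subtrees, the same base case and recurrence distinguishing whether $v_i$ is an autonomous evangelist (original thresholds, $\nnn{v_i}{\cdot}$) or merely benefits from $v$'s later evangelization (residual thresholds), the same extraction $M_2=1+\max_{k\geq\ta(v)}\Dmax_v[d,b,k]$, and the same $O(d^2b)\times O(b)$ time bound. Your added discussion of the timing/circularity issue is a correct elaboration of a point the paper leaves implicit, but it does not change the argument.
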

\begin{proof}{}
Fix an ordering $v_1,v_2,\ldots, v_d$ of the children of the  node $v$.
For $i=1,\ldots, d$, $j=0,\ldots, b$, and $k=0, \ldots, d$, let $\Dmax_v[i,j,k]$ be the maximum number of nodes that can be influenced, in $T(v_1),T(v_2),\ldots,T(v_i)$ assuming  that: $v$ will be an evangelist,
at most $j$ among  the nodes in   $T(v_1),\ldots,T(v_i)$ belong to the seed set,  and $k$  among $v_1,v_2,\ldots,v_i$ will be evangelists
(in the respective subtrees).
By (\ref{eq-case3}) we have 
\begin{equation}\label{eq-dmax}
M_2=\max_{k\geq \ta(v)}\Dmax_v[d,b,k]+1.
\end{equation}
We now show how to compute $\Dmax_v[d,b,k]$ for $k\in\{\ta(v),\ta(v)+1,\ldots,d\}$ by recursively computing the values $\Dmax_v[i,j,k]$, for each $i=1,2,\ldots,d$,  $j=0,1,\ldots,b$ and  $k=0, \ldots, d$.

For $i=1$, we  assign all of the budget to $T(v_1)$ and 
$$\Dmax_v[1,j,k]=
\begin{cases}
 \max \left \{\nwI{v_1}{j},\nnwI{v_1}{j}, \nnnwI{v_1}{j} \right \} & \mbox{if  $k=0$} \\
\nnn{v_1}{j} & \mbox{if  $k=1$} \\
-\infty & \mbox{if  $k>1.$} 
\end{cases}
$$
Consider now $i>1$. For each $a\in \{0,\ldots,j\}$ we  assign budget $a$ to 
the first $i-1$ subtrees, while the remaining  budget $j-a$ is assigned to $T(v_i)$. Hence,  
$$   \Dmax_v[i,j,k]= \max
\begin{cases}
\max_{0\leq a \leq j} \Big \{ \Dmax_v[i{-}1,a,k] + \\
\qquad\qquad\qquad \max \{\nwI{v_i}{j-a},\nnwI{v_i}{j-a},\nnnwI{v_i}{j-a}\} \Big \}\\
\max_{0\leq a \leq j} \left \{ \Dmax_v[i{-}1,a,k{-}1] + \nnn{v_i}{j-a}\right\} 
\end{cases}
$$ 
The computation of  $\Dmax_v$ comprises  $O(d^2  b)$ values and each one is computed recursively in time $O(b)$.  Hence 
we are able to compute it, and 
by (\ref{eq-dmax}),  also $M_2$, in time $O(d^2 b^2)$.
\end{proof}

As a consequence of  Facts \ref{lemma3} and \ref{lemma4} and equation \ref{M2}, we are able to compute $\nnn{v}{b}$ and $\nnnwI{v}{b}$, in time $O(d^2 b^2)$.
\end{itemize}

%

\noindent
The above Facts \ref{lemma1}-\ref{lemma4}, imply  that  the   value
$\max\{\n{r}{\beta},\, \nn{r}{\beta},\,  \nnn{r}{\beta} \}$   in (\ref{eq-mas}) can be computed in time
$$\sum_{v \in V} O(d(v)^2\beta^2) {\times} O(\beta)=O(\beta^3)\times\sum_{v \in V} O(d(v)^2)=O(\min\{n\Delta^2\beta^3,n^2\beta^3\}),$$
where $\Delta$ is the maximum node degree.
Standard backtracking techniques can be used to compute a seed set of size at most $\beta$ that
influences this maximum number of nodes in the same $O(\min\{n\Delta^2\beta^3,n^2\beta^3\})$ time.
This concludes the proof of  Theorem \ref{theorem-tree}. \qed

\section{{The PES problem on Dense graphs}}\label{sec-dense}
In this section we concentrate on the PES problem in graphs characterized by large minimum degree.
In particular, we  relate the graph  minimum degree  to the size of the smallest  perfect seed set, e.g.,  a   set $S \subseteq V$ such that  $\I{S} = V$. 
\\
Assuming that $\ti(v)	\leq \ti $ and $\ta(v)\leq \ta$, for each $v \in V$,  and  $\ta+\ti \leq |V|+2$,
the algorithm PES($G,\ta,\ti$) selects and returns a set $S \subseteq V$, of size 
at most $2(\ta-1)$, that we will prove to be a PES for $G$ whenever the minimum degree of $G$ is $\frac{|V|+\ta+\ti}{2}-2$.

\begin{algorithm}
\SetCommentSty{footnotesize}
\SetKwInput{KwData}{Input}
\SetKwInput{KwResult}{Output}
\DontPrintSemicolon
\caption{ \ \   \textbf{Algorithm} PES($G,\ta,\ti$) \label{alg}}
\KwData { A graph $G=(V,E)$ having thresholds $\ti(v)\leq\ti$ and  $\ta(v)\leq \ta$ for $v\in V$. }
\KwResult{ $S,$ a perfect seed set for $G.$}
\setcounter{AlgoLine}{0}
Set $S$ as any subset of $V$ such that \\
\hspace{1truecm} -  $|S|=\ti$ and \\
\hspace{1truecm} - at least two nodes in $S$ are independent, if possible [e.g., if  $G$ is not a clique], \\
{\bf while} {($|S|< 2(\ta-1)$) \mbox{{\bf AND} ($\exists v\in V-S$ s.t. $|N(v) \cap S| \leq \ti-1$)}} {\bf do}
    		$S=S \cup \{v\}$\\
 \textbf{return} $S$
\end{algorithm}

The construction of the set $S$ returned by the 
algorithm PES($G,\ta,\ti$),
 immediately implies  the fact below.
\begin{fact}\label{fact}

1)  If $|S|< 2(\ta-1)$ then each $v \in V-S$ has at least $\ti$ neighbors in $S$.

2)  If $|S|= 2(\ta-1)$ then the sum of the degrees of the nodes in 
the subgraph   induced by $S$ in $G$
is upper bounded by
$$[\ti (\ti-1)-2] + 2(\ti -1)[2(\ta -1) -\ti]
= (\ti -1)(4\ta -\ti -4) -2$$ if $\ti \geq 2$; it is $0$ if $\ti=1$.
\end{fact}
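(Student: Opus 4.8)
The plan is to read both claims directly off the stopping rule and the incremental construction of $S$ performed by Algorithm PES. For Part~1, observe that the \textbf{while} loop can exit for only two reasons: either its first conjunct fails, giving $|S| = 2(\ta-1)$, or its second conjunct fails. Hence if the returned set satisfies $|S| < 2(\ta-1)$, it must be the second conjunct that failed, i.e.\ there is no $v \in V - S$ with $|N(v) \cap S| \leq \ti - 1$. Equivalently, every $v \in V - S$ has $|N(v) \cap S| \geq \ti$, which is precisely the assertion.

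For Part~2 I would bound the number of edges of the induced subgraph $G[S]$ and then use that the sum of the degrees in $G[S]$ equals $2\,|E(G[S])|$. The bookkeeping device is to charge each edge of $G[S]$ to the endpoint that entered $S$ \emph{later}, separating the initial block of $\ti$ nodes from the nodes appended during the loop. For the initial block, the initialization guarantees two non-adjacent nodes (the ``if possible'' clause, available whenever $G$ is not complete), so this block spans at most $\binom{\ti}{2} - 1$ edges. When $|S| = 2(\ta-1)$, exactly $2(\ta-1) - \ti$ nodes were appended, and at the moment each such node $u$ was inserted the loop guard guaranteed $|N(u) \cap S'| \leq \ti - 1$ for the current set $S'$; charging each of these edges to $u$ accounts for at most $\ti - 1$ edges per appended node. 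Since every edge of $G[S]$ is charged exactly once---inside the initial block, or to its later-inserted endpoint---we get $|E(G[S])| \leq (\binom{\ti}{2} - 1) + (\ti-1)(2(\ta-1)-\ti)$, and doubling gives
$$[\ti(\ti-1)-2] + 2(\ti-1)[2(\ta-1)-\ti] = (\ti-1)(4\ta-\ti-4)-2,$$
as claimed. For $\ti = 1$ the guard reads $|N(u)\cap S'| \leq 0$, so every appended node is non-adjacent to the current set and $G[S]$ is edgeless, whence the sum of degrees is $0$.

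The only delicate point---the main obstacle---is the charging argument itself: one must apply the bound $\ti - 1$ at the \emph{instant of insertion} rather than at termination, since later insertions may create additional edges incident to $u$, and one must verify that ``charge to the later endpoint'' counts every edge of $G[S]$ exactly once. Once this is set up correctly, the remaining arithmetic simplification is routine.
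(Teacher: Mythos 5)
Your proof is correct and is exactly the argument the paper intends: the paper states that the fact follows immediately from the construction, and your reading — Part 1 from which conjunct of the \textbf{while} guard fails at exit, Part 2 from charging each edge of $G[S]$ either to the initial $\ti$-block (which misses one edge by the independence clause) or to its later-inserted endpoint (at most $\ti-1$ per appended node), then doubling — is precisely that construction-based argument. No meaningful difference from the paper's approach.
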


\begin{theorem}
Let $G=(V,E)$ be a graph on $n$ nodes with $\ti(v) \leq \ti$, $\ta(v)\leq \ta$, for each $v\in V$,   where $\ta+\ti \leq n+2$,  and 
$d(v) \geq \frac{n+\ta+\ti}{2}-2$,  for each $v \in V$. 
The  algorithm \textsc{PES}($G,\ta,\ti$) returns a PES for $G$ of size at most 
$2\ta-2$.
\end{theorem}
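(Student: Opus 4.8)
The plan is to prove that the set $S$ returned by \textsc{PES}($G,\ta,\ti$) is always perfect, i.e. $\I{S}=V$; the size bound $|S|\le 2(\ta-1)=2\ta-2$ holds by construction and needs no argument. First I would make a reduction: the algorithm inspects only the global bounds $\ti,\ta$ and never the individual thresholds $\ti(v),\ta(v)$, whereas raising an individual threshold (up to its bound) can only delay a node from becoming influenced or evangelized. Hence it suffices to treat the worst case $\ti(v)=\ti$ and $\ta(v)=\ta$ for all $v$, which I assume henceforth. The proof then splits according to why the \texttt{while} loop halts.

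If the loop halts because no node of $V-S$ has at most $\ti-1$ neighbors in $S$ (so that $|S|<2(\ta-1)$), then part~1 of Fact~\ref{fact} says every $v\in V-S$ has at least $\ti$ neighbors in $S$. Since all nodes of $S$ are seeds, and therefore evangelists from round $0$, each such $v$ already reaches its influence threshold $\ti$ in the first round, so $\I{S}=V$. This is the easy case.

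The substantial case is when the loop halts with $|S|=2(\ta-1)$. Here I would exploit part~2 of Fact~\ref{fact}, which bounds the number of edges inside $S$; summing the hypothesis $d(v)\ge\frac{n+\ta+\ti}{2}-2$ over $v\in S$ and subtracting the internal edges gives a strong lower bound on the number of edges leaving $S$. The argument is then by contradiction on an allegedly uninfluenced node $u$. Writing $\A{S}$ for the final evangelist set, one has $S\subseteq\A{S}$, so $u$ has at most $\ti-1$ neighbors in $S$ and hence at least $2\ta-\ti-1$ non-neighbors in $S$; comparing this with the at most $\frac{n-\ta-\ti+2}{2}$ non-neighbors that the degree bound allows $u$ to have already settles the regime of small $n$. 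For larger $n$ a single-round count does not suffice, because the crude estimates permit up to roughly half of $V-S$ to escape direct influence by $S$, so I would instead follow the evangelist set as it grows round by round, using that every non-evangelist keeps at most $\ta-1$ evangelist neighbors to show that the complement $V-\A{S}$ is both large and internally dense, and then play this against the many edges that the spread-out seed $S$ is forced to send outward.

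The step I expect to be the real obstacle is exactly this cascade analysis for large $n$: it is not enough to count one round, one must track how $\A{S}$ expands and argue that it eventually supplies every node with $\ti$ evangelist neighbors. The two guaranteed independent seed vertices are decisive here---by the minimum-degree hypothesis two non-adjacent nodes share at least $\ta+\ti-2$ common neighbors, which bootstraps the first new evangelizations---and the matching ``$-2$'' saving in part~2 of Fact~\ref{fact} is precisely what makes the closing inequality tight against the threshold $d(v)\ge\frac{n+\ta+\ti}{2}-2$. Managing this tight interplay, rather than any one estimate, is where the work concentrates.
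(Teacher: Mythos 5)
You correctly dispose of the size bound and of the halting case $|S|<2(\ta-1)$ (part 1 of Fact~\ref{fact} gives every node outside $S$ at least $\ti$ seed neighbors, so one round suffices), and your high-level plan for the case $|S|=2(\ta-1)$ --- lower-bound the edges leaving the current evangelist set via the degree hypothesis and Fact~\ref{fact}(2), upper-bound them via the contradiction hypothesis that no new node reaches threshold $\ta$, and track the evangelist set round by round --- is the same strategy the paper follows. But the proposal stops exactly where the proof lives: you state that ``managing this tight interplay \dots is where the work concentrates'' without carrying it out, and the two-phase quantitative argument that closes the case is entirely missing. Concretely, one needs: (i) with $a(i)=|\A{S,i}-S|$, a lower bound on the crossing edges $\sigma(\A{S,i},V-\A{S,i})$ obtained by summing $d(v)\ge \frac{n+\ta+\ti}{2}-2$ over $\A{S,i}$ and subtracting the internal degree sum, which is at most $(\ti-1)(4\ta-\ti-4)-2+a(i)(a(i)-1)+4a(i)(\ta-1)$; this must be shown to strictly exceed the upper bound $(n-2(\ta-1)-a(i))(\ta-1)$ implied by the assumption that every outside node has at most $\ta-1$ evangelist neighbors, and the resulting quadratic inequality in $a(i)$ must be verified on the whole range $0\le a(i)\le \frac{n+\ta+\ti}{2}-3\ta+2$, so that the evangelist set grows by at least one node per round throughout that range; (ii) once $a(i)$ reaches $\frac{n+\ta+\ti}{2}-3\ta+2$, the complement $V-\A{S,i+1}$ has at most $\frac{n-\ta-\ti}{2}+\ta-1$ nodes, and then the degree hypothesis alone forces every remaining node to have at least $\ti$ evangelist neighbors, giving $\I{S}=V$. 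Without these two computations the argument is a plan, not a proof.

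Two of the specific mechanisms you name are also not the ones that do the work. The paper never argues that $V-\A{S}$ is ``internally dense''; the only upper bound used on the crossing edges is the trivial one from the at-most-$(\ta-1)$ hypothesis. And the two independent seed vertices do not ``bootstrap the first new evangelizations'' via their $\ta+\ti-2$ common neighbors --- two evangelists cannot push any node over threshold $\ta$ unless $\ta\le 2$. What their independence actually buys is only the additive $-2$ in Fact~\ref{fact}(2), which propagates into the edge count and is needed to make the inequality in step (i) strict in the extremal cases. Your instinct that this $-2$ is where tightness lives is right, but the route you sketch for exploiting it would not go through as stated.
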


\proof
Consider the  evangelization process in $G$ starting at the set $S$ returned by  the algorithm 
PES($G,\ta,\ti$). 
Let  
 $i\in\{0,1,\ldots\}$ be a round of the process and 
 $a(i)=|\A{S,i}-S|$ be the number of evangelists at round $i$ that  not belong to the seed set  $S$.
If $V-\I{S,i}=\emptyset$ then each node in $V-\A{S,i}$ has at least $\ti$ neighbors in $\A{S,i}$ and the theorem is proved.
Assume then   $V-\I{S,i}\neq \emptyset$. 
By 1) of Fact \ref{fact},  we know  that $|S| = 2(\ta-1)$.
Let  $\sigma(\A{S,i})$ denote the number of edges in the subgraph of $G$ induced by $\A{S,i}$.
In the following we assume that $\ti \geq 2$. 
The proof for $\ti=1$ can be obtained similarly recalling that the value in 2) of Fact \ref{fact} is $0$ in this case.
 By 2) of Fact \ref{fact} and since each node in $\A{S,i}-S$ is connected  at most to each other node in $\A{S,i}\cup S$, we have that sum of the degrees of the nodes in the subgraph of $G$ induced by $\A{S,i}$ is 
\begin{eqnarray}
2\sigma(\A{S,i}) &\leq &  
 (\ti -1)(4\ta {-}\ti {-}4) {-}2 + a(i)(a(i)-1) + 2a(i)[2(\ta-1)]  \\ \nonumber
&=& (\ti -1)(4\ta {-}4{-}\ti) {-}2 + a(i)^2 + a(i) (4\ta-5) 
\end{eqnarray}
Recalling  that $d(v) \geq \frac{n+\ta+\ti}{2}-2$ for each   $v \in V$, we get that 
the number \\$\sigma(\A{S,i},V-\A{S,i})$ of edges connecting one   node in  $\A{S,i}$ and one in  $V-\A{S,i}$ is

\begin{eqnarray} \nonumber
&&\sigma(\A{S,i},V-\A{S,i}) \geq \\ \nonumber 
\\ \nonumber
&& \quad \geq \frac{n{+}\ta{+}\ti{-}4}{2}[2(\ta{-}1){+}a(i)] 
                         - [(\ti -1)(4\ta {-}4{-}\ti) {-}2     + a(i)^2 +   a(i) (4\ta-5)] \\ \nonumber
										\\ \nonumber
&& \quad =(n+\ta+\ti-4)(\ta-1) - (\ti -1)(4\ta -4-\ti) +2  - a(i)^2 + \\ \nonumber
& &\quad \qquad\qquad\qquad\qquad\qquad \qquad\qquad\qquad \qquad\qquad\qquad 
         +  a(i)\left(\frac{n+\ta+\ti}{2} -4 \ta +3\right)\\ \nonumber
\\ \nonumber   
&& \quad = (n{+}\ta{-}3\ti)(\ta{-}1)+(\ti {-}1)\ti {+}2{-} a(i)^2 {+}   a(i)\left(\frac{n+\ta+\ti}{2} -4 \ta +3\right) \label{LB} \\
\end{eqnarray}
We first  determine   the minimum value of $a(i)$ that guaranties  that at least one   node $v\in V-\A{S,i}$ becomes an evangelist  at round $i+1$. 
By contradiction assume that each node in $V-\A{S,i}$ has at most $\ta-1$  neighbors in  $\A{S,i}$. This assumption implies that 
$\sigma(\A{S,i},V-\A{S,i})  \leq   (n-2(\ta -1) - a(i))(\ta-1)$

It is not hard to see that the lower bound  in (\ref{LB}) is larger than the above 
upper bound when 
 $0 \leq a(i) \leq \frac{n+\ta+\ti}{2} -3 \ta +2$.
This leads to a contradiction for such a  range of values of $a(i)$.
Hence, for each round $i$ for which $0 \leq a(i) \leq \frac{n+\ta+\ti}{2} -3 \ta +2$ at least one node $v\in V-\A{S,i}$ moves from $V-\A{S,i}$ to $\A{S,i+1}$ at round $i+1$.

\medskip
We  show now that if $a(i) = \frac{n+\ta+\ti}{2} -3 \ta +2$
(i.e.,  $|\A{S,i+1}| \geq 2(\ta-1) + \frac{n+\ta+\ti}{2} -3 \ta +3$)  then $|V-\I{S,i+1}|=0$, thus  completing the proof.\\
Indeed, we have
$|V-\A{S,i+1}| \leq n-[2(\ta-1) + \frac{n+\ta+\ti}{2} -3 \ta +3] = \frac{n-(\ta+\ti)}{2} + \ta -1$. This implies that 
 the number of evangelists among the  neighbors of any node  $v\in V-\A{S,i+1}$  is at least 

$$\frac{n+\ta+\ti}{2} -2 - \frac{n-(\ta+\ti)}{2} - \ta +2 = \ti.$$

\noindent
Hence, at round $i+1$ each node in $V-\A{S,i}$ is  influenced. Therefore,   $|V-\I{S,i+1}|=0$.
\qed
  
	\medskip
\noindent	We notice that in case $\ta=t_I=2$, we reobtain  the result  for Dirac graphs given in \cite{FPR}.
	\begin{corollary}
Let $G$ be a  graph with 
$d(v) \geq \frac{n}{2}$,  for each $v \in V$. 
The  algorithm PES($G,2,2$) returns an optimal PES  for $G$ of size
$2$.
\end{corollary}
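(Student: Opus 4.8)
The plan is to read the corollary as the specialization $\ta=\ti=2$ of the immediately preceding Theorem, and then supply a short lower-bound argument for optimality. First I would substitute $\ta=\ti=2$ into the Theorem's hypotheses: the requirement $\ta+\ti\le n+2$ becomes $4\le n+2$, i.e.\ $n\ge 2$, which holds whenever the problem is nontrivial; and the degree bound $d(v)\ge\frac{n+\ta+\ti}{2}-2$ becomes $d(v)\ge\frac{n+4}{2}-2=\frac{n}{2}$, which is exactly the Dirac hypothesis of the corollary. Hence the Theorem applies verbatim and already certifies that \textsc{PES}($G,2,2$) outputs a perfect evangelizing set of size at most $2\ta-2=2$.

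Next I would check that the algorithm in fact returns a set of size \emph{exactly} $2$, so that the claimed cardinality is attained. After the initialization step we have $|S|=\ti=2$, while the guard of the \textbf{while}-loop demands $|S|<2(\ta-1)=2$, which is already false; thus the loop body is never executed and $S$ is returned with $|S|=2$. (The clause that picks two independent seeds ``if possible'' only serves to start the percolation correctly in the Theorem's correctness proof and does not affect the cardinality.) It then remains to argue optimality, i.e.\ that no seed set of size smaller than $2$ can be a PES. The empty set gives $\I{\emptyset}=\emptyset\neq V$. For any singleton $S=\{u\}$, every other vertex $v$ satisfies $|N(v)\cap\A{S,0}|=|N(v)\cap\{u\}|\le 1<2=\ti(v)$, so no vertex outside $S$ is ever influenced and $\I{\{u\}}=\{u\}\neq V$. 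Consequently every PES has size at least $2$, and the size-$2$ set produced by the algorithm is optimal.

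I expect essentially no obstacle here, since the entire quantitative content is inherited from the Theorem and the remaining lower-bound observation is elementary. The only point needing a word of care is that the optimality claim relies on the influence thresholds being equal to $2$ (so that a lone evangelist cannot trigger any activation); under the Dirac-graph specialization matching \cite{FPR} this is precisely the intended regime, which is why the bound $2\ta-2$ is tight and the returned pair of seeds is best possible.
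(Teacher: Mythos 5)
Your proof is correct and matches the paper's (implicit) argument: the paper offers no separate proof of this corollary, treating it as the direct specialization $\ta=\ti=2$ of the preceding theorem, exactly as you do, with the size bound $2\ta-2=2$ and the degree condition $\frac{n+4}{2}-2=\frac{n}{2}$ falling out by substitution. Your added observations --- that the \textbf{while}-loop guard is immediately false so the returned set has size exactly $2$, and that no seed set of size $\le 1$ can be perfect when every influence threshold equals $2$ --- correctly supply the optimality claim the paper leaves unstated, under the intended reading (matching \cite{FPR}) that the thresholds are exactly $2$ rather than merely bounded by $2$.
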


\small

\end{document}